\documentclass[11pt]{article}

\usepackage{geometry}
\geometry{a4paper, margin=2.8cm}

\usepackage[utf8]{inputenc}
\usepackage[english]{babel}
\usepackage{lmodern}
\usepackage{microtype}

\usepackage{amsmath,amsfonts,amssymb}
\usepackage{amsthm}

\usepackage{graphicx}
\usepackage{authblk}
\usepackage{standalone}

\usepackage{hyperref}
\usepackage{xcolor}
\usepackage{standalone}
\usepackage{adjustbox}
\usepackage[capitalise]{cleveref}
\usepackage[shortlabels]{enumitem}

\usepackage{braket}
\usepackage{wrapfig}

\usepackage[nocompress]{cite}

\newcommand{\N}{\ensuremath{\mathbb{N}}}
\newcommand{\Z}{\ensuremath{\mathbb{Z}}}

\newcommand{\R}{\ensuremath{\mathbb{R}}}
\newcommand{\C}{\ensuremath{\mathbb{C}}}

\newcommand{\A}{\ensuremath{\mathcal{A}}}

\renewcommand{\epsilon}{\varepsilon}

\renewcommand{\vec}[1]{\mathbf{#1}}

\newcommand{\tr}{\ensuremath{ \mathrm{tr} }}

\newcommand{\wang}[6]{
\draw [black,fill=#3] (#1,#2+1)  -- (#1+0.5,#2+0.5) -- (#1+1,#2+1) -- cycle;
\draw [black,fill=#4] (#1+1,#2+1)  -- (#1+0.5,#2+0.5) -- (#1+1,#2) -- cycle;
\draw [black,fill=#5] (#1,#2)  -- (#1+0.5,#2+0.5) -- (#1+1,#2) -- cycle;
\draw [black,fill=#6] (#1,#2)  -- (#1+0.5,#2+0.5) -- (#1,#2+1) -- cycle;
}

\newtoggle{extern}
\togglefalse{extern}

\usepackage{stmaryrd}
\usepackage{mathtools}
\usepackage{old-arrows}
\usepackage{tikzit}
\usepackage{comment}

\tikzstyle{gn}=[fill=green, draw=black, shape=circle, tikzit category=ZX, tikzit fill=green, tikzit draw=black, tikzit shape=circle, inner sep=0.1em]
\tikzstyle{rn}=[fill=red, draw=black, shape=circle, tikzit fill=red, tikzit draw=black, tikzit category=ZX, tikzit shape=circle, inner sep=0.1em]
\tikzstyle{divide}=[regular polygon, regular polygon sides=3, shape border rotate=90, draw=black, fill={zx_grey}, inner sep=1.5pt, tikzit category=scal, rounded corners=0.8mm]
\tikzstyle{black}=[fill=black, draw=black, shape=circle, tikzit fill=black, tikzit draw=black, tikzit shape=circle, tikzit category=IH, inner sep=2pt]
\tikzstyle{gather}=[fill={zx_grey}, draw=black, tikzit category=scal, rounded corners=0.8mm, regular polygon, regular polygon sides=3, shape border rotate=-90, inner sep=1.5pt]
\tikzstyle{ggen}=[fill=white, draw=black, shape=rectangle, rounded corners=2mm, line width=1pt, tikzit draw=red, tikzit category=scal]
\tikzstyle{white}=[fill=white, draw=black, shape=circle, inner sep=2pt, tikzit category=IH]
\tikzstyle{mbox}=[fill=white, draw=black, rounded rectangle, rounded rectangle west arc=none, tikzit category=scal, tikzit shape=rectangle]
\tikzstyle{A}=[fill=white, shape=circle, tikzit category=scal, inner sep=1pt]
\tikzstyle{ggreen}=[fill=green, draw=black, shape=circle, tikzit category=SZX, tikzit fill=green, tikzit draw=black, line width=1pt, inner sep=0.1em]
\tikzstyle{gred}=[fill=red, draw=black, shape=circle, rounded corners=2mm, tikzit category=SZX, inner sep=0.1em, tikzit fill=red, line width=1pt]
\tikzstyle{ghad}=[minimum size=3mm, font={\scriptsize\boldmath}, shape=rectangle, inner sep=1mm, line width=1pt, outer sep=-1.5mm, scale=0.8, tikzit shape=rectangle, draw=black, fill=yellow, tikzit draw=blue]
\tikzstyle{boxm}=[fill=white, draw=black, rounded rectangle, tikzit category=scal, tikzit shape=rectangle, rounded rectangle east arc=none]
\tikzstyle{box}=[fill=white, draw=black, shape=rectangle]
\tikzstyle{had}=[fill=yellow, draw=black, shape=rectangle, tikzit category=ZX, tikzit fill=yellow, tikzit draw=black, inner sep=2.5pt]
\tikzstyle{gwhite}=[fill=white, draw=black, shape=circle, tikzit fill=white, tikzit shape=circle, line width=1 pt, inner sep=2 pt, tikzit draw=red]
\tikzstyle{gblack}=[fill=black, draw=black, shape=circle, tikzit fill=black, tikzit shape=circle, line width=1 pt, inner sep=2 pt, tikzit draw=red]
\tikzstyle{antipode}=[fill=red, draw=black, shape=rectangle, tikzit fill=red, tikzit draw=black, tikzit shape=rectangle, inner sep=2pt]
\tikzstyle{diamond}=[fill=white, draw=black, shape=diamond, inner sep=2pt]
\tikzstyle{mongr}=[fill=green, draw=green, shape=circle, inner sep=2pt]
\tikzstyle{monbl}=[fill=blue, draw=blue, shape=circle, inner sep=2pt]
\tikzstyle{bg}=[inner sep=0.7mm, minimum width=0pt, minimum height=0pt, fill=green, draw=white, very thick, shape=circle]
\tikzstyle{br}=[inner sep=0.7mm, minimum width=0pt, minimum height=0pt, fill=red, draw=white, very thick, shape=circle]
\tikzstyle{rmat}=[draw, signal, fill=red, signal to=east, signal from=west, inner sep=1pt, minimum height=6pt]
\tikzstyle{lmat}=[draw, signal, fill=red, signal to=west, signal from=east, inner sep=1pt, minimum height=6pt]
\tikzstyle{umat}=[draw, signal, fill=red, signal to=north, signal from=south, inner sep=1pt, minimum width=6pt]
\tikzstyle{dmat}=[draw, signal, fill=red, signal to=south, signal from=north, inner sep=1pt, minimum width=6pt]
\tikzstyle{box}=[shape=rectangle, text height=1.5ex, text depth=0.25ex, yshift=0.5mm, fill=white, draw=black, minimum height=5mm, yshift=-0.5mm, minimum width=5mm, font={\small}]
\tikzstyle{Z dot}=[inner sep=0mm, minimum size=2mm, shape=circle, draw=black, fill={rgb,255: red,160; green,255; blue,160}]
\tikzstyle{gdot}=[minimum size=3mm, font={\scriptsize\boldmath}, shape=rectangle, rounded corners=1.3mm, inner sep=1mm, outer sep=-1.8mm, scale=0.8, tikzit shape=circle, draw=black, fill=green, tikzit draw=blue]
\tikzstyle{X dot}=[Z dot, shape=circle, draw=black, fill={rgb,255: red,220; green,0; blue,0}]
\tikzstyle{rdot}=[minimum size=3mm, font={\scriptsize\boldmath}, shape=rectangle, rounded corners=1.3mm, inner sep=1mm, outer sep=-1.8mm, scale=0.8, tikzit shape=circle, draw=black, fill=red, tikzit draw=blue]
\tikzstyle{grdot}=[minimum size=3mm, font={\scriptsize\boldmath}, shape=rectangle, rounded corners=1.3mm, inner sep=1mm, line width=1pt, outer sep=-1.5mm, scale=0.8, tikzit shape=circle, draw=black, fill=red, tikzit draw=blue]
\tikzstyle{ggdot}=[minimum size=3mm, font={\scriptsize\boldmath}, shape=rectangle, line width=1pt, rounded corners=1.3mm, inner sep=1mm, outer sep=-1.5mm, scale=0.8, tikzit shape=circle, draw=black, fill=green, tikzit draw=blue]
\tikzstyle{arrow}=[-->]
\tikzstyle{rfarr}=[draw, signal, fill=black, signal to=east, signal from=west, inner sep=1pt, minimum height=6pt]
\tikzstyle{lfarr}=[draw, signal, fill=black, signal to=west, signal from=east, inner sep=1pt, minimum height=6pt]
\tikzstyle{ufarr}=[draw, signal, fill=black, signal to=north, signal from=south, inner sep=1pt, minimum width=6pt]
\tikzstyle{dfarr}=[draw, signal, fill=black, signal to=south, signal from=north, inner sep=1pt, minimum width=6pt]
\tikzstyle{ry}=[draw, signal, fill=yellow, signal to=east, signal from=west, inner sep=1pt, minimum height=6pt]
\tikzstyle{ly}=[draw, signal, fill=yellow, signal to=west, signal from=east, inner sep=1pt, minimum height=6pt]
\tikzstyle{uy}=[draw, signal, fill=yellow, signal to=north, signal from=south, inner sep=1pt, minimum width=6pt]
\tikzstyle{dy}=[draw, signal, fill=yellow, signal to=south, signal from=north, inner sep=1pt, minimum width=6pt]
\tikzstyle{red text}=[fill=none, text={rgb,255: red,177; green,37; blue,39}]
\tikzstyle{grey text}=[fill=none, draw=none, text={rgb,255: red,121; green,121; blue,121}]
\tikzstyle{diamond}=[fill=white, draw=black, shape=diamond]

\tikzstyle{arrow}=[->]
\tikzstyle{very thick}=[-, line width=1pt, tikzit draw=red]
\tikzstyle{pointille}=[dashed, -, draw={rgb,255: red,121; green,121; blue,121}, thin]
\tikzstyle{red}=[-, draw=red, opacity=0.5]
\tikzstyle{blue}=[-, draw=blue]
\tikzstyle{green}=[-, draw=green]
\tikzstyle{strike}=[-, tikzit draw={rgb,255: red,191; green,0; blue,64}, strike through]
\tikzstyle{strike'}=[-, tikzit draw=cyan, strike bend]
\tikzstyle{dashed arrow}=[->, tikzit draw=green, draw=black, dashed]
\tikzstyle{reprise}=[-, line width=2pt, tikzit draw={rgb,255: red,255; green,191; blue,191}]
\tikzstyle{grey}=[-, draw={rgb,255: red,191; green,191; blue,191}, opacity=0.5, fill={rgb,255: red,191; green,191; blue,191}]
\tikzstyle{red arrow}=[->, draw={rgb,255: red,177; green,37; blue,39}, fill=none, line width=0.5mm]
\tikzstyle{light red arrow}=[->, draw={rgb,255: red,177; green,37; blue,39}, fill=none, line width=0.5mm, opacity=0.70]
\tikzstyle{black}=[-, draw={rgb,255: red,21; green,21; blue,21}]
\tikzstyle{gris transparent}=[-, draw={rgb,255: red,121; green,121; blue,121}, thin, opacity=0.5]
\tikzstyle{grosse arrow}=[->, thick]

\newcommand{\bvdots}{ \tikz[baseline, every node/.style={inner sep=0}]{ \node at (0,0){.}; \node at (0,-6pt){.}; \node at (0,6pt){.}; } }

\newtheorem{open}{Open problem}

\theoremstyle{plain}
\newtheorem{theorem}{Theorem}[section]
\newtheorem{lemma}[theorem]{Lemma}

\newtheorem{proposition}[theorem]{Proposition}
\crefname{proposition}{Proposition}{Propositions}

\theoremstyle{definition}
\newtheorem{definition}{Definition}[section]
\theoremstyle{remark}
\newtheorem{remark}[theorem]{Remark}
\newtheorem{example}[theorem]{Example}

\title{Aperiodicity in Quantum Wang Tilings} 

\author{Titouan Carette}
\affil{\href{mailto:titouan.carette@lu.lv}{titouan.carette@polytechnique.edu}\\
  LIX, CNRS, École polytechnique, Institut Polytechnique de Paris}

\author{Etienne Moutot}
\affil{\href{mailto:etienne.moutot@math.cnrs.fr}{etienne.moutot@math.cnrs.fr}\\
CNRS, I2M, Aix-Marseille Université, Marseille, France}

\begin{document}

\maketitle

\begin{abstract}
  By reformulating Wang tiles with tensors, we propose a natural generalization to the probabilistic and quantum setting. In this new framework, we introduce notions of tilings and periodicity directly extending their classical counterparts. In the one dimensional case, we recover the decidability of the generalized domino problem by linking it to the trace characterization of nilpotent matrices. In the two-dimensional case, we provide extension of weak and strong aperiodicity respectively and show the equivalence of those generalized notions, extending the well known equivalence in the classical case. We also exhibit a quantum tile set being aperiodic while its underlying classical tile set is not, proving that quantum interference can suppress periodic patterns and paving the way to the investigation of a new kind of aperiodicity.
  Finally, we highlight the many new research directions opened by this generalization of Wang tiles, related to (quantum) cellular automata, condensed matter physics, symbolic dynamics and more. 
\end{abstract}

\section{Introduction}
  Wang tiles are one of the simplest tiling models one can think of. Each tile is a unit square with colored edges, and two tiles can be placed next to each other if and only if the color of their neighboring edges are the same.
  These seemingly simple local matching rules may however create extremely complex global behaviors, for example they can be used to encode any Turing machine \cite{Berger}.
  The strong links between Wang tiles, symbolic dynamics and compatibility theory made them a well-studied model over the years.
  The most natural computational model to relate to Wang tilings may be cellular automata, as any space-time diagram of a cellular automata can be interpreted as a Wang tiling for a well-chosen set of Wang tiles.
  This strong connection led to many undecidability results being proved by reductions to tiling-related decision problems \cite{Kari_1992, Kari_1994, Kari_Ollinger_2008}.
  As many other computational systems, cellular automata have seen their quantum version studied thoroughly, even started from Feynmann \cite{Feynman_1982}. Coming up with the right formal model took a lot of time and effort \cite{Watrous_1995, Schumacher_Werner_2004, Arrighi_Nesme_Werner_2011}, but eventually led to a fruitful discrete model of physics.
  For now, it seems unlikely that they offer any computational advantage compared to other quantum computational models. However, they seem to provide a fantastic tool to simulate other quantum systems \cite{Arrighi_Beny_Farrelly_2020, Farrelly_Streich_2020}.

  Surprisingly, there exist no quantum version of tiling models that we are aware of, even the simplest Wang tile model, that would play the role of space-time diagrams for quantum cellular automata.
  We propose a formalism for \emph{quantum wang tilesets}, and we look into the details of their similarities and differences with their classical counterpart.

  To each tile in a tile set we assign an amplitude, a complex number, from which we can then compute an amplitude for any pattern tiling a given shape. The probability of observing this pattern is the square of the modulus of the amplitude: $P(c)=|a_c |^2 $. This implies that the sums of the modulus squared of the amplitudes of all possible patterns must be one: $\sum_c |a_c |^2 = 1$. Interference occurs when we compute the amplitude of events involving multiple patterns. Then, the rules of quantum mechanics impose that the amplitude of the event is the sum of the amplitudes of the configurations involved: $P(E)=|a_E |^2 = |\sum_{c\in E} a_c |^2 $. Hence, the amplitudes being complex numbers, it is possible that such sum is zero, leading to a counterintuitive, yet experimentally observed, situation where the combination of independently valid patterns leads to a never observed event. Another important consequence follows from the non-copy theorem. As tiles need to send information about themselves to their neighbours, we can only extract information from tiles on the boundary. Thus we will only be able to compute probabilities of events happening on the periphery of the tiled shape. Obtaining internal information would require measuring between the tiles and then preventing interferences. 

  One of the essential properties of tilings is \emph{periodicity}. A key element of the expressiveness of tilings as a computational model is the existence of aperiodic tilesets: a set of tiles that does tile the plane but only in a non-periodic manner \cite{Berger}.
  Periodicity is such a fundamental property of tilings that it seems natural to investigate it in our new quantum setting, where interference phenomena have exciting consequences.

  In this paper, we define a new model of quantum Wang dominoes (dimension 1) and tiles (dimension 2) that can be easily generalized in any dimension.
  In dimension 1, our model is represented by a matrix whose complex coefficients encode the amplitude of each tile being valid. The classical counterpart is a matrix whose 0-1 coefficients encode the fact that a tile is present or not in the tileset. In this case, the coefficients of the $n$-th power of the matrix encode the number of valid patterns of size $n$. In higher dimension, the matrix becomes a tensor and the matrix product is generalized by a tensor contraction, but the interpretation of the coefficients remains the same.
  A similar encoding of classical Wang tiles into tensors has been used in \cite{Tenseur2020, Tenseur2022} to perform reduction decidability of the nullity of a tensor network to the domino problem. 

  Already in dimension 1, quantum interference make the notion of \textit{tiling the line} more involved than in the classical case. Indeed, we show that destructive interference can lead to a tileset that do not tile the line at large scale, even if the underlying classical set of dominoes does.
  In dimension 2, a similar phenomenon happens with aperiodicity: we demonstrate that a new \emph{kind} of aperiodicity may appear, where the periodic tilings are annihilated by destructive interference.
  In order to do that, we also generalize several usual properties of (a)periodic tilings in the quantum setting, which is made non-trivial by the presence of interference.


  The paper is organized as follows. The first section introduces the tensor formalism with a particular emphasis on abstract index notation and string diagrams that are used everywhere after.
  In \cref{sec:1D} we gently introduce the model by studying one-dimensional dominoes and their properties, as well as giving example of interesting quantum behaviors.
  In \cref{sec:2D}, we define the 2D model of quantum Wang tiles and focus on properties of their (a)periodicity. We show that some classical periodicity properties still holds, but that new non-equivalent definitions arise, and provide an example of tileset illustrating this new ''quantum'' definition of aperiodicity.
  In \cref{sec:appli}, we give some examples of applications of this new model to quantum cellular automata and quantum walks.
  Finally, we conclude on the (many) possible research directions with this new model.

\section{Background on tensors} \label{sec:tensors}
  Before introducing our tensorial reformulation of Wang tiles, we review the needed notions and notations relative to tensors in this section.

  \subsection{Definitions}
    In the whole section, we fix a vector space $V$ of finite dimension $d$ over a field $\mathbb{K}$. We also fix a canonical basis $(v_i)_{1\leq i \leq d}$ of $V$. Given two vector spaces $V$ and $W$, of dimension respectively $d$ and $k$, with basis respectively $(v_i)_{1\leq i \leq d}$ and $({w}_j)_{1\leq i \leq k}$, their \textbf{tensor product} is a space $V\otimes W$ of dimension $d \times k$, a basis of which is given by $(v_i \otimes {w}_j )_{1\leq i \leq d,~1\leq j \leq k}$. So fixing a canonical basis for $V$ gives us directly a canonical basis for ${V }^{\otimes n} = V\otimes \cdots \otimes V$. By convention $V^{\otimes 0}=\mathbb{K}$ and $V^{\otimes 1}=V$.

    \begin{definition}[Tensor]
         An $n$-\textbf{tensor} is a vector $T \in {V }^{\otimes n}$.
    \end{definition}

    Since we fixed a basis, we can identify an $n$-tensor with a list of $d^n$ coefficients denoted: $T_{i_1 ~ \cdots~ i_m}\in \mathbb{K}$ indexed by $n$ indices $1\leq i_1 , \cdots, i_m, j_1 , \cdots, j_n \leq d$. When dealing with long lists of indices, we will write $\mathbf{i}$ in bold font instead of $i_1 ~ \cdots~ i_m$, furthermore the notation $1\leq\mathbf{i}\leq d$ indicates that the indices in the list $\mathbf{i}$ are integers between $1$ and $d$. The concatenation of two lists of indices is denoted $\mathbf{i},\mathbf{j}$.  The coefficients of an $n$-tensor are then denoted $T_{\mathbf{i}}\in \mathbb{K}$.

    \begin{example}
         There are numerous examples of tensors:
         \begin{itemize}
              \item A $0$-tensor is a scalar in $\mathbb{K}$.
              \item A $1$-tensor is a vector $v\in V$.
              \item A $2$-tensor can be identified with a linear map $A:V\to V$, the coefficients of the associated matrix being exactly the coefficient $A_{i,j}$ of the tensor.
              \item More generally we can always partition the indices in two sets and interpret any tensor as a linear map $V^{\otimes n}\to V^{\otimes m}$.
         \end{itemize}
    \end{example}
	
	A more concrete way to understand tensors is to see them as multi indexed tables of coefficients, a $0$-tensor is just one number, a $1$-tensor is a list of numbers, a $2$-tensor is a matrix of numbers, a $3$-tensor is a cube \textit{etc...}
	
	\begin{remark}
		The reader familliar with tensors will notice that we are only considering covariant indices, indeed we will only work in finite dimension and with a fixed prefered basis, so we have a canonical isomorphism $V\simeq V^* $, allowing us to simplify the presentation.
	\end{remark}

    We can operate on tensors in various ways.
    \begin{definition}[Tensor product]
         Given a $n$-tensor $T$ and a $m$-tensor ${L}$, their tensor product is the  $(n+m)$-tensor $T\otimes L$ defined as: $(T\otimes L)_{\mathbf{i}, \mathbf{j}}= T_{\mathbf{i}}{L}_{\mathbf{j}}$ with $1\leq \mathbf{i}, \mathbf{j}\leq d$.
    \end{definition}
       \noindent
    This generalises the Kronecker product of matrices.

    \begin{definition}[Contraction]
         Given a $n$-tensor $T$ with $n\geq 2$ and choosing two different indices in positions $a$ and $b$, we can form a $(n-2)$-tensor $\mathrm{tr}_{a,b}(T)$ defined as: $\mathrm{tr}_{a,b} (T)_{\mathbf{i},\mathbf{j},\mathbf{k}}=\sum\limits_{1\leq \ell\leq d} T_{\mathbf{i},\ell,\mathbf{j},\ell,\mathbf{k}}$. Here the lists of indices $\mathbf{i}$, $\mathbf{j}$ and $\mathbf{k}$ are respectively of size $a-1$, $b-a-1$ and $n-b$. 
    \end{definition}

    If we contract a matrix $A$ seen as a $2$-tensor we should obtain a $0$-tensor, \textit{i.e.}, a scalar. In fact the formula gives us $\mathrm{tr}_{1,2}(A)=\sum_\ell A_{\ell,\ell}$, this is the trace of $A$, justifying the trace notation. One can remark that applying contraction to the tensor product of two $2$-tensors $A$ and $B$ seen as matrices is in fact the usual product of matrices. Indeed $\mathrm{tr}_{2,3}(A\otimes B)_{i,j}= \sum_\ell A_{i,\ell}B_{\ell,j}=(AB)_{i,j}$.

  \subsection{Abstract index notation}
    The \textbf{abstract index notation} consists in writing a $n$-tensor $T$ as $T_{x_1 , \cdots, x_m}$ here the $x_k$ must not be thought as number but as indeterminates in a similar way we write polynomials with indeterminates. We can then instantiates those indeterminates by concrete indices to obtain the coefficients of the tensor. This notation allows to represent operations on tensors in a very compact way. Using the bold font notation for lists of indices, the tensor product of two tensors $T_{\mathbf{x}}$ and ${L}_{\mathbf{y}}$ is directly written $T_{\mathbf{x}}{L}_{\mathbf{y}}$. The \textbf{Einstein summation convention} keeps some particular sums implicit. If an index is repeated, meaning we use the same indeterminate to denote two different indices, then it means that we sum over those indices. So a contracted tensor will be directly written $T_{\mathbf{x},t,\mathbf{y},t,\mathbf{z}}$ where the repeated $t$ are at the positions of the contracted indices. Using abstract index notation, we denote the permutation of indices by a permutation $\sigma\in \mathfrak{S}_n$ as $T_{\sigma(\mathbf{x})}$.

  \subsection{Penrose notations}
    Penrose notations, introduced in \cite{penrose1971applications}, are the graphical counterpart of the abstract index notation. The idea is to represent tensors as boxes and indices as wires. A $n$-tensor is then represented by a box with $n$ dangling wires.
    \begin{center}
        \tikzfig{tensor}
    \end{center}
    We can then draw all tensor operations by directly mimicking the abstract index notation. The repeated indices correspond to links. Crossings of wires represent permutations of indices. Notice that the $2$-tensor corresponding to the identity matrix is denoted simply as a wire.
    
    We can rigorously formalize all those notations and the corresponding diagrammatical equational theory in monoidal categories. We invite the interested reader to refer to \cite{selinger2010survey}. We will extensively use those different notations in our tensorial take on Wang tiles. Notice that we can be very lax with our way of representing tensor once we agree on which wire represent which index. 
    For example, a tensor with $4$ indices will be freely represented as:
    \begin{center}
        \tikzfig{abuseexample}
    \end{center}
    We can straightforwardly recover the proper abstract index notation from a diagram as long as we are clear on which link corresponds to which index. Typically, the tensor product of two $4$-tensor $T$ and $L$ and the tensor obtain by contracting an index of $T$ and an index of $L$ can be depicted respectively as:
    
    \begin{center}
    	\tikzfig{abuseexample2}
    \end{center}

\section{Tensorial dominoes} \label{sec:1D}
  We will start by stating our tensorial reformulation of Wang tiles in the one dimensional case. It will allow us to present in a simple setting the different definitions and subtleties of the formalism before moving to the two-dimensional case in the next section, where a lot more care is needed to define everything properly.
      
  \subsection{Definition} 
    We fix a finite set of colors $A$ and fix a basis of $\mathbb{C}^{|A|}$ indexed by $A$. Basis elements are denoted $|c\rangle$ with $c\in A$. This quantum mechanical notation is read "ket $c$". The conjugate transposed of such vectors is denoted $\langle c| = |c\rangle^{\dagger}$, which is read "bra $c$". The tensor product of two basis elements will be denoted: $|ab\rangle = |a\rangle|b\rangle = |a\rangle\otimes |b\rangle$. Notice that for any matrix $M: \mathbb{C}^{|A|} \to \mathbb{C}^{|A|}$ we have $M_{i,j}= \bra{j}M\ket{i}$. A one-dimensional $A$-colored Wang tile, or domino, is a couple $(a,b)\in A\times A$. A set of dominoes is then a subset $D\subseteq A\times A$. We will identify a domino set $D$ to a $2$-tensor $T\in \mathbb{C}^{|A|} \otimes \mathbb{C}^{|A|}$ defined as $T= \sum\limits_{(x,y)\in D} |xy\rangle $ whose coefficients are then either $0$ or $1$.

    \begin{definition}
      A \textbf{possibilistic domino} is a $2$-tensor $T\in \mathbb{C}^{|A|} \otimes \mathbb{C}^{|A|}$ whose coefficients are either $0$ or $1$.
    \end{definition}

    Notice that possibilistic dominos are in bijection with domino sets. In general we will call \textbf{tensorial domino} any tensor $T$ with two indices. The \textbf{support} of $T$ is the classical domino set $\mathrm{supp}(T)$ defined as $\mathrm{supp}(T)= \{(x,y)\in A^2, T_{x,y}\neq 0 \}$. This abstract definition will allow us to extend the domino sets to the probabilistic and quantum setting.

    \begin{definition}
      A \textbf{probabilistic domino} is a $2$-tensor $T\in\mathbb{C}^{|A|} \otimes \mathbb{C}^{|A|}$ whose coefficients are in $[0,1]$ and such that $\sum\limits_{(x,y)\in A^2 } T_{x,y}=1 $.
    \end{definition}

    A probabilistic domino $T$ is the same as a probability distribution over $\mathrm{supp}(T)$. It can be interpreted as a random generator outputting dominoes according to some distribution.

    \begin{definition}
      A \textbf{quantum domino} is a $2$-tensor $T\in \mathbb{C}^{|A|} \otimes \mathbb{C}^{|A|}$ such that $\sum\limits_{(x,y)\in A^2 } |T_{x,y}|^2 =1 $.
    \end{definition}

    Equivalently a quantum domino can be seen as a quantum state where all dominoes of the support are in superposition. The coefficients of the tensor are the amplitudes assigned to each domino.


    Tensorial dominoes can be combined. To simplify the description of those compositions, we will see tensorial dominoes as matrices with coefficients $T_{x,y}$.

    \begin{definition}[Product and union]
      The \textbf{product}, respectively the \textbf{union}, of two tensorial dominoes $T\in \mathcal{M}_{d\times d}(\mathbb{C})$ and $L\in \mathcal{M}_{k\times k}(\mathbb{C})$ are respectively defined as: $(T\times L) = T\otimes L \in \mathcal{M}_{dk \times dk }(\mathbb{C})$ and $(T\uplus L) = T\oplus L \in \mathcal{M}_{(d+k) \times (d+k)}(\mathbb{C})$. Where $\otimes$ is the Kronecker tensor of matrices and $\oplus $ is the direct product of matrices. Notice that those tensorial dominoes need not to be definied on the same alphabet.
    \end{definition}

    Those operations correspond respectively to the cartesian product and disjoint union of both supports, we have $ \mathrm{supp}(T\times T')=\mathrm{supp}(T) \times \mathrm{supp}(T')$ and $ \mathrm{supp}(T\uplus T')= \mathrm{supp}(T)\uplus \mathrm{supp}(T')$.

  \subsection{Tilings as matrix product}
    When it comes to considering the ways to tile the line using tensorial dominoes, we will use the matrix interpretation of $2$-tensors: $T= \sum\limits_{(x,y)\in A^2 } T_{x,y} |x\rangle\langle y| $ depicted:

    \begin{center}
      \tikzfig{domino}
    \end{center}

    Then we will benefit from the historical bias toward $2$-tensors and use results and notations from familiar matrix algebra.

    Let's look closely at the formula of the composition of two copies of a tensorial domino $T$: $(T^2)_{x,y}=\sum_k T_{x,k} T_{k,y}$ depicted as:

    \begin{center}
      \tikzfig{2pattern}
    \end{center}

    The matrix $T^2 $'s coefficients are the number of admissible length $2$ patterns from the support domino set. By admissible, we mean that the two colors in the middle match. The coefficients of $T^n $ can then be interpreted as follow for $a$ and $b$ two colors of $A$:

    \begin{itemize}
    \item If $T$ is possibilistic, then $\langle a|T^{n}|b\rangle$ is the number of admissible length $n$ patterns, starting with color $a$ and ending with color $b$, made from dominoes in the support of $T$.
    \item If $T$ is probabilistic, then $\langle a|T^{n}|b\rangle$ is the probability of forming an admissible length $n$ patterns, starting with color $a$ and ending with color $b$, by sampling $n$ dominoes at random in the support.
    \item If $T$ is quantum, then $\langle a|T^{n}|b\rangle$ is the complex amplitude corresponding of the event of forming an admissible length $n$ patterns, starting with color $a$ and ending with color $b$, by making $n$ quantum dominoes interact.
    \end{itemize}

    \begin{example}\label{ex:1Dclassique}
     Let $T = \begin{pmatrix}
     1&1 \\ 1&0
     \end{pmatrix}$. It is a possibilistic domino representing the classical set of dominoes $\{ \includegraphics[scale=0.5]{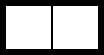},\includegraphics[scale=0.5]{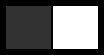},\includegraphics[scale=0.5]{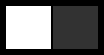} \}$.
     The fact that $T^2 = \begin{pmatrix}
     2&1 \\ 1&1
     \end{pmatrix}$ shows that there are two tilings starting and ending with white, which are \includegraphics[scale=0.5]{white-white}\includegraphics[scale=0.5]{white-white} and \includegraphics[scale=0.5]{white-black}\includegraphics[scale=0.5]{black-white}; and then one for each other borders: \includegraphics[scale=0.5]{black-white}\includegraphics[scale=0.5]{white-white}, \includegraphics[scale=0.5]{white-white}\includegraphics[scale=0.5]{white-black} and \includegraphics[scale=0.5]{black-white}\includegraphics[scale=0.5]{white-black}.
    \end{example}

    In usual terms, a set of dominoes $\tau$ \emph{tiles the line $\Z$} if there exists a configuration $x\in \tau^\Z$ such that each color of the dominoes ``match their neighbors'': $\forall i\in\Z,$ there exists $a,b,c,d\in A$ such that $x_{i-1}=(a,b), x_i=(b,c)$ and $x_{i+1} = (c,d)$.
    A \emph{valid pattern} is an assamblage of tiles $p\in \tau^{S}$ for some $S \subset \Z$ such that all dominoes of $p$ have matching colors on their sides. 
    A classical result of compactness implies that there exists a tiling of $\Z$ if and only if there exists valid patterns of any size (see for example \cite{LindMarcus}).
    This suggests the possibility of a general notion of tiling for any tensorial domino in terms of matrix powers.

    \begin{definition}[Tiling]
    A tensorial domino $T$ \emph{tiles the line} iff $\forall n\in \mathbb{N}, {T^n \neq 0}$. In other words, $T$ doesn't tile the line if and only if $T$ is nilpotent.
    \end{definition}

    This unified definition of tiling has subtly different interpretations depending on the type of tensorial domino you consider.
    In the possibilistic or probabilistic case, this directly corresponds to the existence of a valid Wang tiling by dominoes from the support.
    However, in the quantum case, matrices can have negative coefficients, so interference can come into play even if the support admits perfectly valid tilings. Here is an example of this phenomenon:

    \begin{example}\label{ex:1Dquantique}
    Consider the quantum domino $T=\frac{1}{2}\begin{pmatrix}
      1&1\\ -1& -1
    \end{pmatrix}$.
    Intuitively, it corresponds to the classical set of dominoes $\{ \includegraphics[scale=0.5]{white-white}, \includegraphics[scale=0.5]{black-white} \}$ “minus” $\{ \includegraphics[scale=0.5]{white-black}, \includegraphics[scale=0.5]{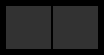} \}$.
    It has full support and hence its support has perfectly valid tilings of $\Z$ as a classical set of dominoes.
    However, interference negate the amplitude of any length $2$ (or larger) patterns as $T^2 = 0$, so the quantum domino does not tile the line.
    \end{example}

    This is our first example of quantum weirdness occurring in the formalism, and not the last.

  \subsection{(a)Periodicity}
    A tiling of the line is said to be \emph{periodic} if there exists a valid pattern starting and ending with the same color $c\in A$. A set of dominoes is said to be \emph{aperiodic} if it tiles the line but only in a non-periodic manner. 
    Given a possibilistic domino $T$, there exists a periodic pattern of size $n$ if and only if there exists $c\in A$ such that $\langle c|T^n |c\rangle \neq 0$.
    We can even count the number of periodic patterns of length $n$ with the formula $\sum_{c\in A} \langle c|T^n |c\rangle = \mathrm{tr}(T^n)$. This number is given by the trace of the matrix depicted as:
    \begin{center}
          \tikzfig{periodom}
    \end{center}

    Similarly, for a probabilistic domino, $\mathrm{tr}(T^n)$ is the probability of obtaining an admissible periodic length $n$ pattern by sampling $n$ random dominoes from the support.
    In both cases, we have that the set of dominoes is aperiodic if and only if $\forall c\in A,\forall n\in\N, \langle c|T^n |c\rangle = 0$, which is equivalent to $\forall n\in\N, \tr(T^n)=0$, as possibilistic and probabilistic dominoes have non-negative coefficients.

    The situation is more subtle for quantum dominoes. Then the trace is the sum of the amplitudes of having a periodic pattern with each possible end. Since those situations are mutually exclusive possibilities it follows from the usual rules of quantum mechanics that their sum, that is the trace, is the amplitudes for obtaining a periodic pattern from the interaction of $n$ dominoes.
    However, the previous equivalence does not hold, and we end up with two non-equivalent definitions of aperiodicity in the general case. In this paper we will use the definition of aperiodicity in terms of trace, as it is the most relevant from a quantum point of view.

    \begin{definition}[Trace aperiodicity]
    A tensorial domino $T$ is said to be \emph{trace-aperiodic} if and only if $\forall n\in \mathbb{N}, \mathrm{tr}(T^n) = 0$. Else it is said \emph{trace-periodic}.
    \end{definition}


    Tileability (equivalently, the nilpotency of the matrix) can be reformulated as follow.

    \begin{proposition}[Trace characterization of nilpotency]\label{trace}
    A $n\times n$ matrix $T$ is nilpotent if and only if forall $1\leq k\leq n, \mathrm{tr}(T^k)=0$.
    \end{proposition}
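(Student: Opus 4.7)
The plan is to argue both directions via the eigenvalues of $T$ over an algebraically closed field (which is our setting anyway, since we work over $\mathbb{C}$).

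For the forward direction, suppose $T^m = 0$ for some $m$. Then every eigenvalue $\lambda$ of $T$ satisfies $\lambda^m = 0$, hence $\lambda = 0$. Since the eigenvalues of $T^k$ are the $k$-th powers of those of $T$, and $\mathrm{tr}(T^k)$ is the sum of these eigenvalues with multiplicity, we get $\mathrm{tr}(T^k) = 0$ for all $k \geq 1$, in particular for $1 \leq k \leq n$. This direction is essentially immediate.

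For the converse, let $\lambda_1, \ldots, \lambda_n$ be the eigenvalues of $T$ with multiplicity, and set $p_k := \sum_i \lambda_i^k = \mathrm{tr}(T^k)$, so by hypothesis $p_1 = \cdots = p_n = 0$. I would then invoke Newton's identities, which (with the convention $e_0 = 1$) give the recursion
$$k\, e_k \;=\; \sum_{i=1}^{k} (-1)^{i-1}\, e_{k-i}\, p_i \qquad (1 \leq k \leq n)$$
relating the power sums to the elementary symmetric polynomials $e_k = e_k(\lambda_1, \ldots, \lambda_n)$. A straightforward induction on $k$ yields $e_1 = \cdots = e_n = 0$. The characteristic polynomial of $T$ therefore collapses to $\chi_T(x) = x^n - e_1 x^{n-1} + \cdots + (-1)^n e_n = x^n$, and Cayley--Hamilton gives $T^n = 0$, so $T$ is nilpotent.

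The only real obstacle is the converse, and it is conceptual rather than technical: one must recognise that the range $1 \leq k \leq n$ in the hypothesis is exactly what Newton's identities need to pin down all $n$ elementary symmetric functions of the eigenvalues. Fewer power sums would be insufficient in general (a diagonal matrix whose nonzero eigenvalues sum to $0$ provides a counterexample), while the traces $\mathrm{tr}(T^k)$ for $k > n$ are determined from the first $n$ via the Cayley--Hamilton recursion and so give no new information. The use of division by $k$ in Newton's identities requires the characteristic of the ground field to exceed $n$, which is automatic for $\mathbb{C}$.
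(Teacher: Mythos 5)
Your proof is correct; the paper does not actually prove this proposition but only cites a standard linear algebra reference, and your argument via the power sums, Newton's identities, and Cayley--Hamilton is precisely the classical textbook proof that such a reference would give. Both directions are sound, and your observation that the range $1\leq k\leq n$ is exactly what is needed to recover all elementary symmetric functions is an accurate account of where the content lies.
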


    We can find this characterization in many linear algebra textbooks, for example, \cite{lang2012algebra}. 
    If we interpret it in terms of possibilistic dominoes, it is precisely the classical result that an aperiodic set of dominoes do not tile the line.
    The proposition also applies in $\R$ or $\C$, allowing us to generalize this result in the general tensorial case:

    \begin{theorem}
    A tensorial domino is trace-aperiodic if and only if it does not tile the line.
    \end{theorem}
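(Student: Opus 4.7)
The plan is to deduce the theorem almost directly from \cref{trace}, since trace-aperiodicity is tailor-made to match the trace condition appearing in the characterization of nilpotency, and ``does not tile the line'' has been defined to mean exactly nilpotency.

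For the forward direction, I would assume $T$ is trace-aperiodic, so $\tr(T^n)=0$ for every $n\in\mathbb{N}$. Letting $d$ be the dimension of the underlying space (so $T$ is $d\times d$), this in particular yields $\tr(T^k)=0$ for every $1\leq k\leq d$. Applying \cref{trace} then gives that $T$ is nilpotent, i.e., $T^n=0$ for some $n$, which by definition means $T$ does not tile the line.

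For the backward direction, I would assume $T$ does not tile the line, i.e., $T$ is nilpotent. The cleanest way to conclude is to invoke the fact that a nilpotent matrix has all eigenvalues equal to zero: indeed, if $T^m=0$ and $\lambda$ is an eigenvalue of $T$ with eigenvector $v$, then $0=T^m v=\lambda^m v$, forcing $\lambda=0$. Counting with algebraic multiplicities in the algebraic closure $\mathbb{C}$, this gives $\tr(T^n)=\sum_i \lambda_i^n=0$ for every $n\geq 1$, so $T$ is trace-aperiodic.

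There is no real obstacle here: the two directions only use standard linear algebra already packaged in \cref{trace} and the eigenvalue characterization of nilpotency. The only minor care needed is to interpret $\mathbb{N}$ in the definition of trace-aperiodicity as the positive integers (so that the non-vanishing $\tr(T^0)=d$ is excluded), which matches the intended reading in terms of periodic patterns of length $n\geq 1$.
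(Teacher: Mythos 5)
Your proof is correct and follows exactly the route the paper intends: the theorem is stated there as an immediate consequence of \cref{trace} (the trace characterization of nilpotency over $\C$), with the forward direction from the finitely many traces $\tr(T^k)$, $1\leq k\leq d$, and the backward direction from the vanishing of all eigenvalues of a nilpotent matrix. Your remark that $\N$ must be read as the positive integers (since $\tr(T^0)=d$ and $T^0=I\neq 0$) is a sensible clarification consistent with the paper's intended reading.
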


    Even in the possibilistic case, it is interesting to link this demonstration through nilpotency to the more common one relying on graphs (see for example \cite{LindMarcus}).
    It follows directly by interpreting the tensorial domino as an adjacency matrix. The coefficients of $T^n$ correspond to the number of length $n$ paths in the graph, and then the absence of any infinite path exactly corresponds to nilpotency. In the same idea, $\mathrm{tr}(T^n)$ counts the number of size $n$ cycles in the graph. Thus, the result states that an acyclic graph has no infinite path.

    However, the strength of the nilpotency approach is that it still holds for any complex matrix. So the result still holds for quantum dominoes, even with our generalized notions of tiling and periodic patterns. This is not obvious, as one could have expected cases when interference can suppress periodic patterns while still allowing arbitrary large configurations. Or, on the contrary, situations when we have periodic patterns but interferences prevent large configurations. Here is a typical example:

    \begin{example}
    The quantum domino $\frac{1}{\sqrt{2}}\begin{pmatrix}
      1&0\\0&i
    \end{pmatrix}$ has size one periodic patterns as $\mathrm{tr}(T)= \frac{1+i}{\sqrt{2}}$. However, it has no periodic patterns of size two as $\mathrm{tr}(T^2)=\frac{1}{2}\mathrm{tr}\left(\begin{pmatrix}
      1&0\\0&-1
    \end{pmatrix}\right)=0$.
    \end{example}

    So in the quantum setting, we have to drop the classical intuition that a domino with size $k$ periodic patterns will admit size $kn$ periodic patterns for all $n$. 
    Hopefully, we still have the following:

    \begin{lemma}\label{lem:perpav}
    If $T$ is non-aperiodic,  then it admits arbitrarily large periodic patterns. In other words, for any $N\in \mathbb{N}$, there exists $k\geq N$ such that $\mathrm{T^k}\neq 0$.
    \end{lemma}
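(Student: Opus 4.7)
The hypothesis is that $T$ is not trace-aperiodic, so there exists some $n_0$ with $\tr(T^{n_0})\neq 0$. The plan is to reduce everything to the theorem just proved, which states that trace-aperiodicity is equivalent to non-tileability (i.e., nilpotency of $T$). By contraposition, $T$ is not nilpotent, and therefore $T^k\neq 0$ for \emph{every} $k\in\mathbb{N}$ (since $T^k=0$ would force $T^j=0$ for all $j\geq k$). In particular, for any $N$ one may simply take $k=N$ and conclude $T^k\neq 0$, which is exactly the "in other words" formulation stated in the lemma.

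The more informative reading — namely that $T$ admits \emph{periodic} patterns of arbitrarily large size, i.e.\ $\tr(T^k)\neq 0$ for infinitely many $k$ — requires a small spectral argument on top of the reduction above. Since $T$ is non-nilpotent, it has at least one nonzero eigenvalue. Let $\lambda_1,\dots,\lambda_r$ be the distinct nonzero eigenvalues of $T$ with algebraic multiplicities $m_1,\dots,m_r\geq 1$. Because the zero eigenvalue contributes nothing to $\tr(T^k)$ for $k\geq 1$, one has
\[
\tr(T^k)\;=\;\sum_{i=1}^{r} m_i\,\lambda_i^k .
\]

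To finish, I would argue by contradiction: if $\tr(T^k)=0$ for every $k\geq K$, then the $r$ equations for $k=K,K+1,\dots,K+r-1$ form a homogeneous linear system in the unknowns $(m_i\lambda_i^K)_{1\le i\le r}$ whose coefficient matrix is, up to a diagonal rescaling, the Vandermonde matrix $(\lambda_i^{j-1})_{1\le i,j\le r}$. Since the $\lambda_i$ are pairwise distinct and nonzero, this matrix is invertible, so $m_i=0$ for all $i$, contradicting $m_i\geq 1$. Hence $\tr(T^k)\neq 0$ for arbitrarily large $k$, which yields the stronger statement and, a fortiori, the equivalent statement $T^k\neq 0$ for arbitrarily large $k$. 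The only step of real content is the Vandermonde argument; the reduction to non-nilpotency comes for free from the theorem, so I do not anticipate a genuine obstacle here.
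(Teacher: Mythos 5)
Your argument is correct, and it establishes the statement the lemma is actually after (arbitrarily large $k$ with $\tr(T^k)\neq 0$, i.e.\ arbitrarily large periodic patterns), not just the literal ``$T^k\neq 0$'' reading, which, as you rightly observe, is trivially true for \emph{all} $k$ once $T$ is non-nilpotent. The route differs from the paper's. The paper stays entirely within the trace characterization of nilpotency (\cref{trace}): from $\tr(T^k)\neq 0$ it deduces that $T^k$, hence every $T^{kl}$, is non-nilpotent, and then applies \cref{trace} to the matrix $T^{kl}$ to extract some $1\leq m\leq n$ with $\tr(T^{klm})\neq 0$, where $klm\geq l$ can be made as large as desired. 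You instead give a direct spectral argument: writing $\tr(T^k)=\sum_i m_i\lambda_i^k$ over the distinct nonzero eigenvalues and showing via the invertibility of the Vandermonde matrix on $r$ consecutive exponents that these power sums cannot all vanish from some index $K$ onward. Your version is self-contained (it essentially re-derives the relevant half of \cref{trace} in the process) and gives slightly more, namely that the set $\{k:\tr(T^k)\neq 0\}$ meets every window of $r$ consecutive integers; the paper's version is shorter because it recycles \cref{trace} as a black box and only needs the multiplicative trick $T^{kl}$. Both are valid proofs of the lemma.
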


    \begin{proof}
      Let $k$ be such that $\mathrm{T^k}\neq 0 $ then it follows that $T^k $ is not nilpotent and then $T^{kl}$ is not nilpotent for any $l\in \mathbb{N}$. So by \cref{trace}, there is an integer $1 \leq m\leq n $ such that $\mathrm{T^{klm}}\neq 0$. 
    \end{proof}

    Happily, the link between periodicity and tilability still holds. Quantum dominoes are no wilder than classical ones on this aspect. We will see that it is no longer true for quantum Wang tiles.

    For the classical case, trace-aperiodicity is equivalent to the aperiodicity of the support.
    
    \begin{proposition}
    Let $T$ be a possibilistic or probabilistic domino. Then it is trace-aperiodic if and only $\mathrm{supp}(T)$ is aperiodic.
    \end{proposition}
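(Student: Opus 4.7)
The plan is to exploit the non-negativity of the coefficients of possibilistic and probabilistic dominoes, so that the vanishing of $\tr(T^n)$ decouples into the vanishing of each individual periodic-pattern contribution. This is exactly the feature that will fail in the quantum setting, where amplitudes can cancel.

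First I would expand a diagonal coefficient of $T^n$ as a sum of monomials,
\[
(T^n)_{c,c} \;=\; \sum_{c_1,\ldots,c_{n-1}\in A} T_{c,c_1}\,T_{c_1,c_2}\cdots T_{c_{n-1},c},
\]
and observe that in the possibilistic case each factor $T_{c_i,c_{i+1}}\in\{0,1\}$, so the summand equals $1$ exactly when $(c,c_1),(c_1,c_2),\ldots,(c_{n-1},c)$ is an admissible periodic pattern of length $n$ built from dominoes of $\supp(T)$, and $0$ otherwise. In the probabilistic case, each coefficient lies in $[0,1]$ and the summand is a product of non-negative reals, strictly positive under the same support condition. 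Non-negativity then forces $(T^n)_{c,c}=0$ if and only if no periodic pattern of length $n$ starting and ending with color $c$ exists in $\supp(T)$.

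Next I would sum over $c$: since $\tr(T^n)=\sum_{c\in A}(T^n)_{c,c}$ is now a sum of non-negative reals, it vanishes iff every diagonal entry vanishes, iff $\supp(T)$ admits no periodic pattern of length $n$ at all. Quantifying over $n\in\mathbb{N}$ yields the two-way implication: $T$ is trace-aperiodic iff no periodic pattern of any length exists in $\supp(T)$, which is precisely the notion of aperiodicity of the support used in the discussion preceding the proposition.

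The proof is essentially a bookkeeping expansion of the parenthetical remark made just before the statement; there is no real obstacle beyond being careful with the correspondence between monomials in the $T_{c_i,c_{i+1}}$ and admissible periodic patterns in $\supp(T)$. The whole content of the proposition is negative: it isolates exactly the classical feature (non-negative coefficients forbid cancellation) whose failure will drive the quantum phenomena explored later.
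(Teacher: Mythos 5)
Your argument is correct and is essentially the one the paper itself relies on: the proposition is stated without a separate proof precisely because the preceding discussion already observes that, for non-negative coefficients, $\tr(T^n)=0$ iff every $\langle c|T^n|c\rangle=0$ iff no admissible periodic pattern of length $n$ exists in $\supp(T)$. Your expansion of the diagonal entries into monomials is just the explicit bookkeeping of that same non-cancellation argument.
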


    This is no longer true in the quantum case, when aperiodicity of the support is stronger, as empathized by the following proposition.
    
    \begin{proposition}\label{prop:aptotap}
    Let $T$ be a quantum domino. If its support is aperiodic, then $T$ is trace-aperiodic.
    \end{proposition}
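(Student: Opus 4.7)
The plan is to argue directly from the expansion of $\mathrm{tr}(T^n)$ as a sum over closed index walks, and observe that aperiodicity of the support rules out every summand.

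First I would unfold the trace using the definition of matrix powers:
\[
\mathrm{tr}(T^n) \;=\; \sum_{i_1, \ldots, i_n} T_{i_1, i_2}\, T_{i_2, i_3}\, \cdots\, T_{i_{n-1}, i_n}\, T_{i_n, i_1},
\]
where each summand is a product along a closed walk $i_1 \to i_2 \to \cdots \to i_n \to i_1$ in $A$. The key observation is that a summand is nonzero only when every factor $T_{i_j, i_{j+1}}$ is nonzero, i.e., only when each domino $(i_j, i_{j+1})$ lies in $\mathrm{supp}(T)$. In other words, only closed walks in the support's directed graph can contribute to $\mathrm{tr}(T^n)$.

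Next I would invoke the classical case. The preceding proposition gives that the possibilistic domino $S$ associated to $\mathrm{supp}(T)$ is trace-aperiodic whenever $\mathrm{supp}(T)$ is aperiodic, hence $\mathrm{tr}(S^n) = 0$ for all $n$. Since $S$ has $0/1$ coefficients, this vanishing is equivalent to the combinatorial statement that there is no closed walk of any length in the support graph. Combining this with the previous observation, every summand in the expansion of $\mathrm{tr}(T^n)$ vanishes, yielding $\mathrm{tr}(T^n) = 0$ for all $n$, which is precisely trace-aperiodicity of $T$.

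I do not expect any serious obstacle: the proof is essentially a transfer argument showing that the zero pattern of $T$ forces the quantum trace to reduce to a sum indexed by support-respecting closed walks. The only subtle point is to remember that we need the \emph{absence of closed walks at every length}, not just nilpotency of the support matrix; this is exactly what aperiodicity of $\mathrm{supp}(T)$ provides via the classical version of the statement. Note that the converse fails for the reason illustrated by \cref{ex:1Dquantique}, where the support has full closed-walk structure but destructive interference kills $T^n$ for $n \geq 2$.
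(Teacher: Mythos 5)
Your proposal is correct and follows essentially the same route as the paper: expand $\mathrm{tr}(T^n)$ as a sum over closed walks, use the non-negativity of the associated possibilistic domino to conclude that every closed walk of every length contains an edge outside $\mathrm{supp}(T)$, and hence that every summand for $T$ vanishes. The paper phrases step three directly via $\bra{a}U^n\ket{a}=0$ rather than citing the preceding proposition, but the content is identical.
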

    
    \begin{proof}
      Let $T$ be such that $\mathrm{supp}(T)$ is aperiodic. Let $U$ be the possibilistic domino associated with $\mathrm{supp}(T)$ and take $n\in \N$.
    Then by assumption, for any $a\in A, $ \[ \bra{a} U^n \ket{a} = 0 . \]
    Therefore,  \[ \bra{a} \sum_{a_1 \cdots a_n\in A^n} \prod_{i=1}^n  U^{a_i}_{a_{i+1}} \ket{a} = 0  \]
    with $a_0 = a_{n+1} = a$.
    As $U$ has only non-negative coefficients, all the terms of the sum are zero, meaning that for each word $a_1 \cdots a_n\in A^n$, there is one $i\in \{1,\hdots, n\}$ such that $U^{a_i}_{a_{i+1}}=0$.
    By definition of $U$, we will also have $T^{a_i}_{a_{i+1}0}=0$, and so
    \[ \bra{a} \sum_{a_1 \cdots a_n\in A^n} \prod_{i=1}^n  T^{a_i}_{a_{i+1}} \ket{a} = 0  \]
    Implying that $\tr(T^n)=0$, therefore $T$ is trace-aperiodic.
    \end{proof}
	
	A counter example to the converse is given by \cref{ex:1Dquantique}. Intuitively, the aperiodicity of the support means that the tileset is just a ``quantum version'' of a classical aperiodic tileset.
   



    Overall, we have the following implications in the 1-dimensional quantum case, with the addition of the right-left implication in the probabilistic and possibilistic case.

    \begin{center}
      $\mathrm{supp}(T)$ doesn't tile $\Z~\Leftrightarrow~\mathrm{supp}(T)$ aperiodic $~\Rightarrow~$ $T$ trace-aperiodic $~\Leftrightarrow~ T$ doesn't tile $\Z$
    \end{center}


	The fact that aperiodicity of the support implies trace aperiodicity and that trace aperiodicity is equivalent to non-tileability, suggest that trace aperidicity is the right extension of the concept of aperiodicity to the quantum case.
    Now that we have extensively studied the one-dimensional case, we will extend all definitions and interpretations to the two-dimensional one. Sadly, we will not be able to rely on familiar matrix algebra anymore, and the full power of the tensor formalism will be required there.

\section{Tensorial tiles} \label{sec:2D}
  \subsection{Definition}
    \begin{wrapfigure}{r}{0.35\textwidth}
      \centering
      \includegraphics[scale=1]{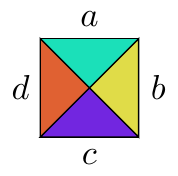}\quad
      \includegraphics[scale=0.5]{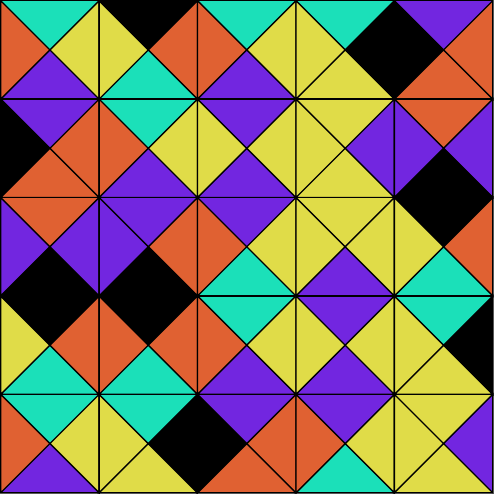}
      \caption{Example of Wang tile and portion of $\Z^2$ tiled.}
    \end{wrapfigure}
    
    As before, let $A$ be a finite alphabet. A (two-dimensional) \emph{Wang tile} is a quadruplet $(a,b,c,d)\in A^4$, each color corresponding to a side of the tile. Let $\tau$ be a finite set of Wang tiles.
    A pattern $p\in \tau^S$ of finite support $S\subset \Z^2$ is said to be \emph{valid} for $\tau$ if the color of the sides of each tile of the pattern match, i.e. for all $(i,j)\in D,$ write $x_{(i,j)} = (a,b,c,d)$, then
    \[
          \begin{cases}
            (i+1, j) \in D \Rightarrow \exists f,g,h,& x_{(i+1, j)} = (f,g,h,b) \\
            (i-1, j) \in D \Rightarrow \exists f,g,h,& x_{(i-1, j)} = (f,d,g,h) \\
            (i, j+1) \in D \Rightarrow \exists f,g,h,& x_{(i, j+1)} = (f,g,a,h) \\
            (i, j-1) \in D \Rightarrow \exists f,g,h,& x_{(i, j-1)} = (c,f,g,h) \\
          \end{cases}
    \]
        We say that $\tau$ \emph{tiles the plane $\Z^2$} if the above holds for $S = \Z^2$.
        
    Like for the case of $\Z$, compactness of the space $\A^{\Z^2}$ implies that $\tau$ tiles the plane if and only if there exists arbitrarily large valid patterns.
    A valid pattern (or a tiling) $x$ of support $S$ is said to be \emph{weakly periodic} of period $\vec u\neq 0$ if for all $\vec v\in S$, $\forall k\in \Z, \vec v+k\vec u\in S \Rightarrow x_{\vec v+k\vec u} = x_{\vec v}$. It is \emph{strongly periodic} if it has two non-colinear vectors of periodicity. 
    It turns out that a tileset admits a weakly periodic tiling if and only if it admits a strongly periodic one, thus we simply say that $\tau$ is \emph{aperiodic} if it has valid tilings but no weakly (equivalently no strongly) periodic ones.
    A fundamental result in tilings is that in dimension two, unlike dimension one, there exist aperiodic tilesets~\cite{Berger}. We aim to investigate if quantum interference allow us to build new \emph{kinds} of aperiodic tilesets.

    \medskip

    Our definition of tensorial dominoes on the line directly extends into the definition of tensorial tiles in the plane.

    \begin{definition}
    A \textbf{tensorial tile} is a 4-tensor, that can be seen as $T\in \mathbb{C}^{|A|} \otimes \mathbb{C}^{|A|} \otimes \mathbb{C}^{|A|} \otimes \mathbb{C}^{|A|}$.
    It is:
    \begin{itemize}
      \item \textbf{possibilistic} if and only if $\forall (x,y,z,t)\in A^4, T_{x,y,z,t}\in \{0,1\}$.
      \item \textbf{probabilistic} if and only if $\forall (x,y,z,t)\in A^4, T_{x,y,z,t}\in [0,1]$ and $\sum_{(x,y,z,t)\in A^4} T_{x,y,z,t}=1$.
      \item \textbf{quantum} if and only if $\sum_{(x,y,z,t)\in A^4} |T_{x,y,z,t}|^2 =1$.
    \end{itemize}
    \end{definition}

    By convention, a tensorial tile will be depicted as:
    \tikzfig{tile}

    Notice that we are writing indices in clockwise order, starting by the upper one. The interpretation of the coefficients for the different types of tensorial tile are completely analogous to tensorial dominoes. However, the situation is more intricate when it comes to composing tiles.

    We also define product and union of tensorial tile in a way that we will again have: $ \mathrm{supp}(T\times L)=\mathrm{supp}(T) \times \mathrm{supp}(L)$ and $ \mathrm{supp}(T\uplus L)= \mathrm{supp}(T)\uplus \mathrm{supp}(L)$.

    \begin{definition}[Union]
    The \textbf{union} of two tensorial tiles $T\in \mathbb{C}^{d^{4}}$ and $L\in \mathbb{C}^{k^{4}}$ is the tensorial tile $T\uplus L \in \mathbb{C}^{(d+k)^{4}}$ whose coefficients are:

    \begin{center}
      $(T\uplus L)_{x,y,z,t} = \begin{cases}
        T_{x,y,z,t} \text{ if } 1\leq x,y,z,t \leq d \\
        L_{x-d,y-d,z-d,t-d} \text{ if } d < x,y,z,t \leq d+k \\
        0 \text{ otherwise.}
      \end{cases}$
    \end{center}
    which corresponds to the direct sum of the tensors.

    \end{definition}

    The product corresponds to the tensor product with an additional permutation of indices.

    \begin{definition}[Product]
    The \textbf{product} of two tensorial tiles $T\in \mathbb{C}^{d^{4}}$ and $T'\in \mathbb{C}^{d'^{4}}$, $(T\times T')\in \mathbb{C}^{(dd')^{4}}$ is defined as: $(T\times T')_{x,a,y,b,c,z,d,t}= T_{x,y,z,t}T'_{a,b,c,d}$, graphically:
    \begin{center}
                  \scalebox{0.75}{
                    \tikzfig{producttile}
                  } 
    \end{center}
    \end{definition}

    \newpage 
    \subsection{Tilings as tensor networks}
    \begin{wrapfigure}{r}{0.4\textwidth}
      \centering
      \vspace*{-2em}
      \scalebox{0.9}{\tikzfig{zindex}}
    \end{wrapfigure}
    In two dimensions we can't directly see a tensorial tile as a matrix, so now juxtapositions of tiles will not be represented as matrix product but as tensor contraction. The idea is to place some tensorial tiles in the plane and perform a contraction each time two tiles are adjacent. Thus given a tensorial tile $T$ and a finite part of the plane $S$ we should be able to construct a tensor $S\cdot T$ that represents $S$ covered with the tensor $T$ on each of its positions, and contracted where necessary.
          We start by fixing a set of abstract indices themselves indexed by couples of half integers $\left(i_{a,b}\right)_{(a,b)\in \mathbb{Z}\times (\mathbb{Z}+\frac{1}{2}) \bigcup (\mathbb{Z}+\frac{1}{2}) ^\times \mathbb{Z}}$, they corresponds to the red dots on the picture, one for each possible link.
          
    This allows us to match $S$ with a tensor in abstract index notation.

    \begin{definition}
      Given a tensorial tile $T$ and finite \emph{shape} $S \subset \Z^2$, we define: \[ S\cdot T = \prod_{(x,y)\in S} T_{i_{(x,y+\frac{1}{2})},i_{(x+\frac{1}{2},y)},i_{(x,y-\frac{1}{2})},i_{(x-\frac{1}{2},y)}} \]
      
    \end{definition}

    Thanks to Einstein's summation convention, each repeated indices are summed, and the necessary contractions occur. Of course, once the tensor have been properly defined, we are free to rename the indices, thus the final result is translation invariant: $S\cdot T= t(S)\cdot T$ for any translation $t$.

    \begin{example}
      We consider the shape $S=\{(-1,1),(1,1),(0,0),(1,0),(0,-1)\}$. Then the corresponding tensor $S\cdot T$ is: \begin{align*}
        T_{i_{(-1,\frac{3}{2})},i_{(-\frac{1}{2},1)},i_{(-1,\frac{1}{2})},i_{(-\frac{3}{2},1)}}T_{i_{(1,\frac{3}{2})},i_{(\frac{3}{2},1)},i_{(1,\frac{1}{2})},i_{(\frac{1}{2},1)}}T_{i_{(0,\frac{1}{2})},i_{(\frac{1}{2},0)},i_{(0,-\frac{1}{2})},i_{(-\frac{1}{2},0)}} \\
        T_{i_{(1,\frac{1}{2})},i_{(\frac{3}{2},0)},i_{(1,-\frac{1}{2})},i_{(\frac{1}{2},0)}}T_{i_{(0,-\frac{1}{2})},i_{(\frac{1}{2},-1)},i_{(0,-\frac{3}{2})},i_{(-\frac{1}{2},-1)}}
      \end{align*}
      Which after renaming of the indices is: $T_{a,b,c,d}T_{e,f,g,h}T_{i,j,k,l}T_{g,m,n,j}T_{k,o,p,q}$, a $14$-tensor with $14$ indices and two repeated ones, $g$ and $j$. Graphically:
      
      \begin{center}
        \tikzfig{stexample}
      \end{center}
      
      Where we can see the two links corresponding to indices $g$ and $j$.
    \end{example}

    When taking the union (i.e. direct sum) of tensorial tiles, linearity gives us the following property.

    \begin{proposition}
            \label{prop:sum}
      Let $T$, $L$ be two tensorial tiles and $S$ a shape. Then 
      $ S\cdot (T \uplus L) = (S\cdot T) \uplus (S \cdot L) $.
    \end{proposition}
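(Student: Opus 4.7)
The plan is to check the identity coefficient by coefficient on the open indices of both sides, exploiting the block-diagonal structure of $T\uplus L$ on the ambient alphabet $A_T\sqcup A_L$ of size $d+k$.

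First I would write $T\uplus L = \tilde{T} + \tilde{L}$, where $\tilde{T}$ is $T$ extended by zero to the ambient alphabet (and $\tilde{L}$ is $L$ similarly extended, shifted onto the second block). Substituting this decomposition into the product of tile-factors that defines $S\cdot(T\uplus L)$ and invoking multilinearity of the tensor product and of contraction, I expand $S\cdot(T\uplus L)$ as a sum, indexed by block assignments $f\colon S\to\{T,L\}$, of the tensor obtained by placing $\tilde{T}$ at positions labelled $T$ and $\tilde{L}$ at positions labelled $L$, before contracting along the shared edges prescribed by $S$.

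The central step is to argue that every non-constant $f$ contributes zero. If two adjacent positions in $S$ receive different labels, the contracted index sitting on their common edge would have to take a value simultaneously in $\{1,\dots,d\}$ (for the $\tilde{T}$-factor to be non-zero) and in $\{d+1,\dots,d+k\}$ (for the $\tilde{L}$-factor to be non-zero); since these sets are disjoint, every summand of the contraction over that shared index vanishes term by term, killing the whole contribution. Propagating this observation along a path in $S$ joining any two tile positions leaves only the two constant assignments $f\equiv T$ and $f\equiv L$.

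Finally, $f\equiv T$ contributes $S\cdot \tilde{T}$, which by the $A_T$-support of $\tilde{T}$ is precisely the canonical embedding of $S\cdot T$ into the ambient tensor space, and $f\equiv L$ analogously contributes the embedded $S\cdot L$. Their sum is, by the very definition of $\uplus$ extended coefficient-wise to tensors of higher order, equal to $(S\cdot T)\uplus(S\cdot L)$. The main technical step is the cross-term elimination: it crucially uses both the disjointness of the two alphabets appearing in $T\uplus L$ and the connectedness of $S$, so that a single block label propagates through the whole shape; for a shape that breaks into several connected components one would apply the argument to each piece separately.
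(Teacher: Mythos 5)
Your proof takes essentially the same route as the paper's, which consists of a single displayed computation asserting that the product of the tile-factors distributes over $\uplus$; your version actually justifies that step, which the paper does not. The expansion of $S\cdot(T\uplus L)$ over block assignments $f\colon S\to\{T,L\}$ and the observation that any edge shared by two differently labelled positions forces the contracted index into two disjoint ranges is exactly the right argument, and it is where the content of the proposition lives.

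One caveat: your closing remark about disconnected shapes is the one place where the argument goes wrong. If $S$ has two components $S_1,S_2$, the assignments that put the $T$-block on $S_1$ and the $L$-block on $S_2$ share no contracted edge and therefore do \emph{not} vanish; already for $S$ a pair of non-adjacent cells one gets $(T\uplus L)\otimes(T\uplus L)$, whose mixed block $\tilde T\otimes\tilde L+\tilde L\otimes\tilde T$ is generally nonzero, while $(S\cdot T)\uplus(S\cdot L)$ is supported only on the two pure blocks. So applying the argument componentwise does not rescue the identity; the proposition as stated really requires $S$ connected (which covers the rectangles used in the rest of the paper). This is a gap in the paper's statement and one-line proof as much as in your last sentence, but you should replace ``apply the argument to each piece separately'' with an explicit connectedness hypothesis.
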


    \begin{proof}
      Unfolding the definitions, we have:
      
      \begin{align*}
        S\cdot (T \uplus L)&=\prod_{(x,y)\in S} (T \uplus L)_{i_{(x,y+\frac{1}{2})},i_{(x+\frac{1}{2},y)},i_{(x,y-\frac{1}{2})},i_{(x-\frac{1}{2},y)}}\\
        &= \left(\prod_{(x,y)\in S} T_{i_{(x,y+\frac{1}{2})},i_{(x+\frac{1}{2},y)},i_{(x,y-\frac{1}{2})},i_{(x-\frac{1}{2},y)}}  \right)\\
        &\uplus \left(\prod_{(x,y)\in S} L_{i_{(x,y+\frac{1}{2})},i_{(x+\frac{1}{2},y)},i_{(x,y-\frac{1}{2})},i_{(x-\frac{1}{2},y)}}  \right)\\
        &=(S\cdot T) \uplus (S \cdot L)
      \end{align*}
    \end{proof}

    For a possibilistic tile, the coefficients of $S\cdot T$ correspond to the number of possible ways to tile the shape $S$ while satisfying the border conditions given by the indices. As for tensorial dominoes, the coefficients of $S\cdot T$ for probabilistic and quantum tiles correspond respectively to probabilities and amplitudes.
    For possibilistic dominoes, the existence of a valid (infinite) tiling is equivalent to the fact that any (finite) shape should have a valid tiling.
    This suggests the following generalized notion of tiling:

    \begin{definition}
      A tensorial tile $T$ tiles the plane if and only if for all finite $S\subset \mathbb{Z}^2, S\cdot T\neq 0$.
    \end{definition}

    It is easy to see that any sub-shape of a non-zero tensor will have a non-zero tensor.
    \begin{lemma}\label{lemma:subshape}
      Let $T$ be a tensorial tile and $S'\subseteq S \subset \Z^2 $ such that $S\cdot T\neq 0$. Then $S'\cdot T\neq 0$.
    \end{lemma}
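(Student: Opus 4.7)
The plan is to argue by contrapositive: assuming $S' \cdot T = 0$, I will show that $S \cdot T = 0$, contradicting the hypothesis. The key observation is that the definition of $S \cdot T$ as a product (with Einstein summation) splits naturally along the partition $S = S' \sqcup (S \setminus S')$.

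First, I would write
\[
S \cdot T \;=\; \prod_{(x,y) \in S} T_{i_{(x,y+\frac{1}{2})}, i_{(x+\frac{1}{2},y)}, i_{(x,y-\frac{1}{2})}, i_{(x-\frac{1}{2},y)}} \;=\; (S' \cdot T) \cdot ((S \setminus S') \cdot T),
\]
where the right-hand side is understood in abstract index notation, the shared indices (those on the boundary between $S'$ and $S \setminus S'$) getting contracted by Einstein's convention and the remaining indices being the dangling ones on the boundary of $S$. This reduces the problem to a statement about tensor contraction: if one of two factors is identically zero as a tensor, then any tensor obtained by multiplying and contracting them is also identically zero.

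Next, I would unpack this latter fact coefficient-wise. Each coefficient of $S \cdot T$ is, by definition of contraction, a finite sum of monomials, and each monomial contains exactly one coefficient from $S' \cdot T$ and one from $(S \setminus S') \cdot T$. If $S' \cdot T = 0$, every such coefficient vanishes, so every monomial is zero, and hence every coefficient of $S \cdot T$ vanishes, giving $S \cdot T = 0$.

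I do not expect any real obstacle here; the only mild care required is the bookkeeping of which indices in the product for $S \cdot T$ are boundary indices of $S'$ (contracted with $(S \setminus S') \cdot T$) and which are boundary indices of $S$ (left dangling). Once the factorization above is written cleanly, the contrapositive conclusion is immediate. A minor remark worth including is that this argument is completely type-agnostic: it works identically for possibilistic, probabilistic, and quantum tensorial tiles, since it relies only on the multiplicative structure of the coefficients in $\mathbb{C}$.
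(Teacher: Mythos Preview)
Your argument is correct. The factorization $S \cdot T = (S' \cdot T) \cdot ((S \setminus S') \cdot T)$ in abstract index notation is exactly right: each half-integer index is touched by at most two cells of $S$, so splitting the product over $S = S' \sqcup (S \setminus S')$ gives one factor per piece with precisely the boundary indices between $S'$ and $S \setminus S'$ repeated and hence contracted. From there the coefficient-wise conclusion is immediate.

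There is nothing to compare against: the paper does not supply a proof of this lemma, stating only that ``it is easy to see that any sub-shape of a non-zero tensor will have a non-zero tensor.'' Your write-up is precisely the kind of unpacking that justifies that sentence, and the remark that the argument is type-agnostic is a nice touch.
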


    \noindent
    Therefore, it is enough to check that a tileset tiles infinitely many rectangles to know if it tiles the whole plane.
    We will write $R_{m,n} = \llbracket 1,m \rrbracket \times \llbracket 1,n \rrbracket \subset \Z^2$.

    \begin{proposition}
      A tensorial tile $T$ tiles the plane if and only if for all $m,n$, ${R_{m,n}\cdot T\neq 0}$.
    \end{proposition}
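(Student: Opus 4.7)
The plan is to prove the two directions of the equivalence separately, with the forward direction being immediate from the definition and the backward direction relying on \cref{lemma:subshape} together with the translation invariance observed right after the definition of $S \cdot T$.

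For the forward direction, if $T$ tiles the plane, then by definition $S \cdot T \neq 0$ for every finite $S \subset \Z^2$. Since each rectangle $R_{m,n} = \llbracket 1,m \rrbracket \times \llbracket 1,n \rrbracket$ is a finite subset of $\Z^2$, this directly gives $R_{m,n} \cdot T \neq 0$ for all $m,n$.

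For the backward direction, I would start from an arbitrary finite $S \subset \Z^2$ and construct a rectangle that contains a translated copy of $S$. Concretely, write $a = \min \{x : (x,y) \in S\}$, $b = \max \{x : (x,y) \in S\}$, $c = \min \{y : (x,y) \in S\}$, $d = \max \{y : (x,y) \in S\}$, and let $t$ be the translation sending $(a,c)$ to $(1,1)$. Then $t(S) \subseteq R_{m,n}$ with $m = b-a+1$ and $n = d-c+1$. By translation invariance, $S \cdot T = t(S) \cdot T$, so it suffices to prove that $t(S) \cdot T \neq 0$. By hypothesis, $R_{m,n} \cdot T \neq 0$, and since $t(S) \subseteq R_{m,n}$, \cref{lemma:subshape} yields $t(S) \cdot T \neq 0$. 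Hence $S \cdot T \neq 0$ for all finite $S$, which is precisely the definition of $T$ tiling the plane.

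There is no real obstacle here: the argument is essentially bookkeeping. The only point worth stressing is that translation invariance is genuinely needed because the definition of $R_{m,n}$ is anchored at the origin, whereas an arbitrary finite shape $S$ may sit anywhere in $\Z^2$; without the remark $S \cdot T = t(S) \cdot T$, one could not reduce directly to a rectangle of the form $R_{m,n}$. Everything else follows from the fact that finiteness of $S$ guarantees a bounding box, and from \cref{lemma:subshape}, which ensures that nonvanishing on a larger shape propagates to every subshape.
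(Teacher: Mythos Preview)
The paper states this proposition without proof (it is presented as an immediate consequence of \cref{lemma:subshape} and the translation invariance of $S\cdot T$). Your argument is correct and is precisely the natural proof: the forward direction is by definition, and the backward direction embeds an arbitrary finite shape into a translated rectangle and applies \cref{lemma:subshape}.
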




    \begin{example}
      A very small yet interesting classical tileset is the following one:
      \begin{center}
          \begin{tikzpicture}[scale=1.5]
               \wang{0}{0}{white}{white}{white}{black!80}
               \wang{1.5}{0}{white}{white}{black!80}{white}
               \wang{3}{0}{white}{black!80}{white}{white}
               \wang{4.5}{0}{black!80}{white}{white}{white}
          \end{tikzpicture}
      \end{center}
      It is the famous dimer model, i.e. tilings of the grids by $2\times 1$ rectangles, which has been studied extensively \cite{Kasteleyn, TemperleyFisher, PlanarDimers}.
      As a possibilistic tile, it is represented by the following binary tensor, with $x,y,z,t\in\{0,1\}$:
      \[ T_{xyzt} = 1 \text{ if and only if } x+y+z+t=1 .\]
      It turns out that this tensor corresponds to the \emph{black spider} of the ZW-calculus, a well studied graphical language for quantum computing.
      In \cite{ZW}, this correspondence allowed us to make a new combinatorial interpretation of the ZW calculus. And the study of its fragment representing dimer tilings allowed us to develop new techniques to count dimer tilings based on diagram rewriting.
    \end{example}

    \begin{example}
      One can derive a quantum tileset from the previous dimer model, for example by considering a model with two different complex weights on the horizontal and vertical dimers. For that we define $T$ by:
      \[
      \begin{cases}
        T_{1,0,0,0} = \frac{\sqrt{7}}{4}\\
        T_{0,0,1,0} = \frac{\sqrt{7}}{4}\\
        T_{0,1,0,0} = -\frac{1}{4}\\
        T_{0,0,0,1} = \frac{1}{4}\\
        T_{x,y,z,t} = 0 & \text{otherwise}
      \end{cases}
      \]
      Then, it is a quantum tile, and it represents the dimer tileset with amplitudes $\frac{\sqrt{7}}{4}$ on tiles \begin{tikzpicture}[scale=1]
               \wang{0}{-0.5}{black!80}{white}{white}{white}
      \end{tikzpicture} 
      and 
      \begin{tikzpicture}[scale=1]
               \wang{0}{-0.5}{white}{white}{black!80}{white}
      \end{tikzpicture}, amplitude $-\frac{1}{4}$ on
      \begin{tikzpicture}[scale=1]
               \wang{0}{-0.5}{white}{black!80}{white}{white}
      \end{tikzpicture} 
      and $\frac{1}{4}$ on
      \begin{tikzpicture}[scale=1]
               \wang{0}{-0.5}{white}{white}{white}{black!80}
      \end{tikzpicture}.
      In other words, vertical dimers have weight $\frac{7}{16}$ and horizontal ones $-\frac{1}{16}$.
    \end{example}

  \subsection{(a)Periodicities}
    As our definition of tensorial tiles allows us to only talk about sets of valid patterns, and to stay as close as possible to the usual terminology, we will only define notions of aperiodicity, not periodicity.

    In the general tensorial case, it is not immediate that weakly and strongly aperiodic are equivalent. We will see in this section that they are, just like in the classical case.

    First, let us define a directional trace along some vector.




    \begin{definition}[Directional trace]
      Let $T$ be a tensorial tile, $\vec u\in\Z^2-\{(0,0)\}$ and $S\subset \Z^2$. The \emph{trace along $\vec u = (a,b)$} of $S\cdot T$ is:
      \[ \tr_{\vec u}(S\cdot T) =  \prod_{(x,y)\in S} T_{i_{(x,y+\frac{1}{2})\hspace{-0.55em}\mod \vec u},~i_{(x+\frac{1}{2},y)\hspace{-0.55em}\mod \vec u},~i_{(x,y-\frac{1}{2})\hspace{-0.55em}\mod \vec u},~i_{(x-\frac{1}{2},y)\hspace{-0.55em}\mod \vec u}} \]
    \end{definition}

    Intuitively, the directional trace will sum together all the indices that can be ``linked'' by the given vector, the other borders being untouched and their indices staying free.
    \begin{figure}[ht!]
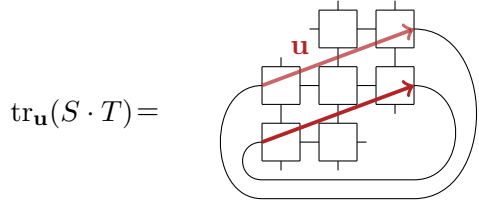

      \centering
      \tikzfig{traceu_example}
      \caption{Example of directional trace with $\vec u=(3,1)$. }
      \label{fig:traceu}
    \end{figure}

    We will usually apply it to rectangular shapes. When tracing a rectangle shape without specifying a direction, we will always imply that we sum opposite sides together:
    \[ \tr(R_{m,n}\cdot T) = \tr_{(0,n)}(\tr_{(m,0)}(R_{m,n}\cdot T)) \]

    \begin{center}
      \tikzfig{trace2D_example}
    \end{center}

	Notice that we have $\tr(R_{m,n}\cdot (T\uplus L))=\tr(R_{m,n}\cdot T)+ \tr(R_{m,n}\cdot L)$ and $\tr(R_{m,n}\cdot (T\times L))=\tr(R_{m,n}\cdot T) \tr(R_{m,n}\cdot L)$.
    Using this directional trace, we can now define the two notions of trace-aperiodicity. Note that intuitively, a \textit{strongly} aperiodic tileset is a tileset that cannot tile in a \textit{weakly} periodic manner. Similarly, a weakly aperiodic tileset cannot tile in a strongly periodic way (the two strongness notions switch when we consider the negative assertion).

    \begin{definition}[Strong trace aperiodicity]
      A tensorial $T$ is \emph{strongly trace-aperiodic} if for all $\vec u=(x,y) \neq 0$, there exists an $n$ such that either $\tr_{\vec u}(R_{x,n}\cdot T) = 0$ or $\tr_{\vec u}(R_{n,y}\cdot T) = 0$.
    \end{definition}

    \begin{definition}[Weak trace aperiodicity]
      A tensorial $T$ is \emph{weakly trace-aperiodic} if for all rectangle $R_{n,m}$, $\tr(R_{n,m}\cdot T) = 0$.
    \end{definition}

    Those definitions are generalizations of the usual ones for Wang tile sets. 

    First, we show that the link between the aperiodicity of a tensorial tile and the aperiodicity of its supports works as in dimension one.
    \begin{proposition}
            \label{prop:trivial}
      Given a tensorial tile $T$, if $\mathrm{supp}(T)$ is weakly (respectively strongly) aperiodic then $T$ is weakly (respectively strongly) trace aperiodic. Furthermore, the converse holds if $T$ is a possibilistic or probabilistic tile.
    \end{proposition}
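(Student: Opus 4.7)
The plan is to extend the strategy of Proposition~\ref{prop:aptotap} by exploiting the fact that any directional trace of $S \cdot T$ expands, by multilinearity of the tensor contraction, as a sum over color assignments to the internal (and identified) indices, each summand being a product of $T$-coefficients of the tiles placed on $S$. Letting $U$ denote the possibilistic tile associated with $\supp(T)$, so that $U_{x,y,z,t} = 1$ if and only if $T_{x,y,z,t} \neq 0$, the crucial bridge is the implication $U_{x,y,z,t} = 0 \Rightarrow T_{x,y,z,t} = 0$: any zero summand in the expansion of a trace of $U$ corresponds to a zero summand in the expansion of the same trace of $T$.

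For the forward direction, I would first translate the classical aperiodicity of $\supp(T)$ into the vanishing of the corresponding traces of $U$. In the weakly aperiodic case, $\supp(T)$ admits no strongly periodic tiling; and since any strongly periodic tiling with independent lattice periods $(a,b)$ and $(c,d)$ also admits the axis-aligned periods $(ad-bc, 0)$ and $(0, ad-bc)$ obtained as integer combinations, this is equivalent to $\tr(R_{n,m}\cdot U) = 0$ for all $n, m$. In the strongly aperiodic case, a standard compactness argument shows that the absence of any tiling weakly periodic in direction $\vec u$ forces some directional trace $\tr_{\vec u}(R_{x,n}\cdot U)$ or $\tr_{\vec u}(R_{n,y}\cdot U)$ to vanish for a sufficiently large $n$. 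In either case, because $U$ has non-negative coefficients, the vanishing of these traces --- being sums of non-negative reals --- forces every individual color-assignment summand to vanish; each such summand must then contain a tile outside $\supp(T)$ whose $T$-coefficient is likewise zero. The analogous summand therefore vanishes in the expansion of the same trace of $T$, and (weak or strong) trace-aperiodicity of $T$ follows.

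For the converse in the possibilistic or probabilistic case, the non-negativity argument applies directly to $T$ itself. If a trace $\tr_{\vec u}(R\cdot T)$ is zero, then every color-assignment summand is zero, so each color completion of $R$ places at least one tile with $T$-coefficient zero, i.e. outside $\supp(T)$; this rules out the existence of the corresponding periodic pattern. Conversely, any periodic tiling of $\supp(T)$ witnessed by some rectangle produces a color assignment whose term is strictly positive, preventing the relevant trace from vanishing. The main obstacle I anticipate is not algebraic but notational: carefully matching the Einstein summation of the directional trace with the combinatorial description of tilings weakly periodic along an arbitrary $\vec u = (x,y)$, and confirming that the specific rectangles $R_{x,n}$ and $R_{n,y}$ used in the strong definition are precisely the ones that witness weak periodicity in direction $\vec u$. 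Once this correspondence is spelled out, the transfer between $U$ and $T$ is a routine consequence of non-negativity.
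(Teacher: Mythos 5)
Your proposal is correct and follows essentially the same route as the paper's proof: expanding each (directional) trace as a sum over color assignments, identifying non-zero summands with valid periodic-boundary tilings of stripes or rectangles that extend to weakly or strongly periodic tilings of the plane, and invoking non-negativity to rule out cancellation in the converse for possibilistic and probabilistic tiles. The only cosmetic difference is that you route the forward direction through the auxiliary possibilistic tile $U$ (mirroring \cref{prop:aptotap}) and spell out the axis-aligned-period fact, whereas the paper argues by direct contrapositive on $T$; the underlying argument is identical.
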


    \begin{proof}
      We start by proving the contrapositive of the weak case. Let's consider a tensorial tile $T$ which is not strongly aperiodic, in other words, there is a $\vec u=(x,y) \neq 0$ such that for all $n$, $\tr_{\vec u}(R_{x,n}\cdot T) \neq 0$ and $\tr_{\vec u}(R_{n,y}\cdot T) \neq 0$, by symmetry we will consider that $x\neq 0$ and only consider the rectangles $R_{x,n}$. This implies that at least one of the coefficient in the sum defining the trace is non-negative, and then $\mathrm{supp}(T)$ can tile arbitrary wide stripes with the $u$-directional boundary conditions.
      
      \begin{center}
        \tikzfig{weakperiod}
      \end{center}
      
      By repeating those stripes vertically we get $u$-periodic tilings of arbitrarily wide shapes,
      and then a valid  $u$-periodic tiling of $\Z^2$ by compactness. So $\mathrm{supp}(T)$ is not strongly aperiodic.
      
      For the converse in the possibilistic and probabilistic case, if $\mathrm{supp}(T)$ admits a $u$-periodic tiling providing a non-null term in the sum defining the $u$-directional trace. If the coefficients of $T$ are non-negative, this implies that the $u$-directional trace is non-null as well, and then $T$ is not strongly trace-aperiodic.
      
      \smallskip
      
      We now prove the contrapositive of the weak case. Let's consider a tensorial tile $T$ which is not weakly aperiodic. In other words, there is a rectangle $R_{n,m}$ such that $\tr(R_{n,m}\cdot T) \neq 0$. It implies that at least one of the coefficients in the double sum defining the trace is non-zero. Then $\mathrm{supp}(T)$ can tile the rectangle with both vertical and horizontal periodic boundary conditions.
      
      \begin{center}
        \tikzfig{strongperiod}
      \end{center}
      
      Repeating this rectangle horizontally and vertically provides a valid tiling of $\Z^2$, which is both $(n,0)$-periodic and $(0,m)$-periodic, and then $\mathrm{supp}(T)$ is not weakly aperiodic.
      
      For the converse in the possibilistic and probabilistic case, if $\mathrm{supp}(T)$ has strongly periodic tilings, then it is known that $\mathrm{supp}(T)$ admits rectangular periods, meaning that there are two vectors $(n,0)$ and $(O,m)$ and a configuration which is both $(n,0)$-periodic and $(O,m)$-periodic. This tiling provides a way to tile the rectangle $R_{n,m}$ with both vertical and horizontal boundary conditions, and then there is a non-null term in the sum defining $\tr(R_{n,m}\cdot T)$.
      If the coefficients of $T$ are non-negative (possibilistic and probabilistic case) this implies that $\tr(R_{n,m}\cdot T) \neq 0 $, and then $T$ is not weakly trace-aperiodic.
    \end{proof}

    Using the previous results on dominoes one can show, as expected, that strong \emph{periodicity} (the negation of the previous definition) implies that the tile tiles the plane:

    \begin{lemma}\label{lemma:periodictiles}
    If a tensorial tile is strongly trace-periodic, then it tiles the plane.
    \end{lemma}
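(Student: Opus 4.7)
The plan is to reduce the two-dimensional question to the one-dimensional trace characterization of non-nilpotency (\cref{trace}) via a transfer-matrix argument, and then to use a ``sandwich'' construction that converts non-nilpotency at one fixed height into non-zero tensors for arbitrarily large rectangles. First I reduce to the axis-aligned case $\vec u = (0, h)$: the case $\vec u = (a, 0)$ follows by swapping the two coordinate directions, while I expect the general skewed case $\vec u = (a, b)$ with $a, b \neq 0$ to be the main technical obstacle, requiring a twisted transfer matrix whose iterates recover $\tr_{\vec u}(R_{n, b} \cdot T)$. Under the axis-aligned assumption, the hypothesis reads $\tr_{(0, h)}(R_{n, h} \cdot T) \neq 0$ for every $n \geq 1$, so setting the horizontal transfer matrix $M := \tr_{(0, h)}(R_{1, h} \cdot T) \in \mathrm{End}(\C^{|A|^h})$ and observing that $M^n = \tr_{(0, h)}(R_{n, h} \cdot T)$ shows that $M$ is not nilpotent; by \cref{trace} it admits a non-zero eigenvalue $\lambda$ with eigenvector $v \neq 0$.

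Next, for each $m \geq 1$ I introduce the \emph{sandwich matrix} $A_m \in \mathrm{End}(\C^{|A|^m})$ obtained by contracting the $h$ left and $h$ right boundaries of $R_{m, h} \cdot T$ against $\bra{v}$ and $\ket{v}$:
\[
(A_m)^{t, b} \;:=\; \sum_{l, r \in A^h} \overline{v_l}\, v_r \,(R_{m, h} \cdot T)^{l, r, t, b}.
\]
The identity $Mv = \lambda v$ gives $\tr(A_m) = \braket{v|M^m|v} = \lambda^m \|v\|^2 \neq 0$, so $A_m$ has non-zero trace, is itself not nilpotent, and hence $A_m^n \neq 0$ for every $n \geq 1$.

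Finally, stacking $n$ copies of $R_{m, h}$ vertically yields $R_{m, nh}$, and contracting each of the $n$ pairs of horizontal boundaries against $v$ collapses the stacked tensor precisely onto $(A_m^n)^{t, b}$ by iterated matrix multiplication. Since $A_m^n \neq 0$, some entry of $R_{m, nh} \cdot T$ must be non-zero, so $R_{m, nh} \cdot T \neq 0$ for every $m, n \geq 1$. Applying \cref{lemma:subshape} to sub-rectangles of these non-zero tensors then gives $R_{m', n'} \cdot T \neq 0$ for all $m', n' \geq 1$, hence $T$ tiles the plane. The main technical points are the short tensor-network computation identifying the contracted stack with $A_m^n$, and the adaptation of this sandwich argument to the skewed $\vec u$ case noted above.
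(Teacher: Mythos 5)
The main problem is that you have negated the wrong definition. In this paper ``strongly trace-periodic'' is the negation of \emph{weak} trace-aperiodicity (the strength adverbs swap under negation, as the authors remark just before the two definitions, and as their own proof confirms), so the hypothesis is simply that there exist $n,m$ with $\tr(R_{n,m}\cdot T)\neq 0$: a single rectangle whose fully wrapped trace is non-zero. You have instead negated \emph{strong} trace-aperiodicity, which hands you a direction $\vec u=(a,b)$ and a whole family of non-vanishing directional traces. That reading is what forces the skewed case $a,b\neq 0$ on you, and you explicitly leave it open; taming skewed directional traces is essentially the content of \cref{prop:2periodic} and is considerably harder than this lemma, so deferring it is a genuine gap rather than a routine adaptation. (The two hypotheses do turn out to be equivalent, but only via \cref{prop:2periodic}, which is proved after this lemma.)

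The good news is that your sandwich construction is sound and transfers verbatim to the correct hypothesis, where no skewed case arises. Take $M:=\tr_{(n,0)}(R_{n,m}\cdot T)$, the vertical transfer matrix of the width-$n$ strip with horizontally periodic boundary; its matrix trace is $\tr(R_{n,m}\cdot T)\neq 0$, so $M$ is not nilpotent and has a non-zero eigenvalue $\lambda$ with eigenvector $v$. Your sandwich matrix $A_k$ (namely $R_{n,km}\cdot T$ with top and bottom contracted against $\bra{v}$ and $\ket{v}$, viewed as a matrix from left to right) satisfies $\tr(A_k)=\bra{v}M^k\ket{v}=\lambda^k\|v\|^2\neq 0$, hence $A_k^j\neq 0$ and $R_{jn,km}\cdot T\neq 0$ for all $j,k$, and \cref{lemma:subshape} finishes. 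This is a legitimate alternative to the paper's route, which instead applies \cref{lem:perpav} in each direction to get $\tr(R_{nl,mk}\cdot T)\neq 0$ only along subsequences of $l$ and $k$; your eigenvector trick gives non-vanishing for every multiple at no extra cost. But as submitted, with the wrong hypothesis and the skewed case unproved, the argument is incomplete.
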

    
    \begin{proof}
    	Let $T$ be a strongly periodic trace-periodic tensorial tile. In other words, there exists $n$ and $m$ such that $\tr(R_{n,m}\cdot T) \neq 0 $. Seing $\tr_{(0,n)}(R_{n,m}\cdot T)$ as a matrix we have $\tr(R_{n,m}\cdot T)=\tr(\tr_{(0,n)}(R_{n,m}\cdot T))$. Then $\tr_{(0,n)}(R_{n,m}\cdot T)$ is a trace-periodic tensorial domino, and by Lemma \ref{lem:perpav}, one can find arbitrarily large k such that $\tr_{(0,n)}(R_{n,m}\cdot T)^k = \tr_{(0,n)}(R_{n,mk}\cdot T)\neq 0$. Reiterating the same argument in the other direction we have that we can find arbitrarily large $k$ and $l$ such that $\tr(R_{nl,mk}\cdot T) \neq 0$. It implies that $R_{nl,mk}\cdot T \neq 0$, so $T$ tiles the plane.
    \end{proof}

    As the name suggests, any strongly trace-aperiodic tensorial tile is also weakly trace-aperiodic. 
    Perhaps less straightforwardly, weak and strong aperiodicity are actually equivalent, just like in the classical case.

    \begin{proposition}
            \label{prop:2periodic}
      If a tensorial tile $T$ is weakly trace-aperiodic, then it is strongly trace-aperiodic.
    \end{proposition}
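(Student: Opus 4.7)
The plan is to prove the contrapositive: if $T$ is not strongly trace-aperiodic, exhibit a rectangle $R_{p,q}$ with $\tr(R_{p,q}\cdot T)\neq 0$. By hypothesis there is $\vec u=(x,y)\neq 0$ such that $\tr_{\vec u}(R_{x,n}\cdot T)\neq 0$ and $\tr_{\vec u}(R_{n,y}\cdot T)\neq 0$ for every $n\in\N$. Using the symmetries $\vec u\mapsto -\vec u$ and the vertical reflection of $T$, we may assume $x,y\geq 0$. The axis-aligned cases $x=0$ or $y=0$ reduce directly to the one-dimensional \cref{trace}: for instance, if $\vec u=(x,0)$, the transfer matrix $H:=\tr_{(x,0)}(R_{x,1}\cdot T)$ satisfies $H^n=\tr_{(x,0)}(R_{x,n}\cdot T)\neq 0$ for all $n$, so $H$ is non-nilpotent and \cref{trace} yields some $q$ with $\tr(R_{x,q}\cdot T)=\tr(H^q)\neq 0$. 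The symmetric argument handles $\vec u=(0,y)$.

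The remaining case is $x,y>0$. The idea is to construct, out of $\vec u$, a matrix whose power is a genuine axis-aligned transfer matrix. Define $M:=\tr_{\vec u}(R_{x,y}\cdot T)$; a direct inspection shows that no internal identification fires on $R_{x,y}$ under $\vec u$, so $M$ carries exactly $2(x+y)$ free legs. Bundling (bottom, right) as input and (top, left) as output, $M\in\mathcal{M}_{|A|^{x+y}\times|A|^{x+y}}(\C)$. The geometry of vertical stacking then gives $M^{\ell}=\tr_{\vec u}(R_{x,\ell y}\cdot T)$, whence the hypothesis forces $M^{\ell}\neq 0$ for all $\ell$, i.e.\ $M$ is non-nilpotent. \cref{trace} then supplies a $k$ with $\tr(M^k)\neq 0$.

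The final, and most delicate, step is to reinterpret this scalar as a rectangular trace. Set $p:=xk$. Because $(p,0)=k\vec u-(0,ky)$ lies in $\Lambda:=\langle\vec u,(0,ky)\rangle$, the torus $\Z^2/\Lambda$ has horizontal period $p$; its Hermite normal form $((p,0),(x,y))$ shows that it closes up vertically after $y$ rows with a horizontal shift of $x$. Standard transfer-matrix bookkeeping therefore gives $\tr(M^k)=\tr(S\,H_p^{\,y})$, where $H_p:=\tr_{(p,0)}(R_{p,1}\cdot T)$ is the horizontal transfer matrix and $S$ is the shift by $x$ on $(\C^{|A|})^{\otimes p}$. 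Two facts now combine productively: translation invariance of $T$ gives $SH_p=H_pS$, and $xk=p$ forces $S^k=I$, so $(S\,H_p^{\,y})^k=S^k H_p^{\,ky}=H_p^{\,ky}$. Non-vanishing of $\tr(S\,H_p^{\,y})$ forces $S\,H_p^{\,y}$ to have a nonzero eigenvalue, hence so does $H_p^{\,ky}$, hence so does $H_p$; a second application of \cref{trace} to $H_p$ produces $q$ with $\tr(R_{p,q}\cdot T)=\tr(H_p^{\,q})\neq 0$. The main obstacle is precisely this last reinterpretation: nothing immediate bridges a skewed torus partition function to a rectangular one, and the commutation $SH_p=H_pS$ together with $S^k=I$ is exactly what collapses the skewed trace into a pure power of $H_p$ so that the one-dimensional nilpotency criterion can be invoked a second time.
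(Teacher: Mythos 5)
Your proof is correct and follows essentially the same route as the paper's: both argue by contrapositive, build a non-nilpotent transfer-type matrix out of the non-vanishing skewed traces, apply \cref{trace} once to obtain a non-zero twisted (torus) trace, convert the twist into a finite-order cyclic shift that commutes with a straight cylinder transfer matrix, and apply \cref{trace} a second time to produce a rectangle with non-zero trace. The only differences are bookkeeping choices --- your diagonal stacking of the single $x\times y$ block and the change of fundamental domain giving $\tr(M^k)=\tr(S\,H_p^{\,y})$ with $S^k=I$, versus the paper's matrices $K_n$ and $L=\sigma\circ\tr_{(q,0)}\left(R_{q,b}\cdot T\right)$ with $\sigma$ of finite order --- which do not change the substance of the argument.
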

	
	The proof is more difficult than the well-known similar result in the classical case due to the possible presence of negative coefficients in the tensors. However, it follows similar ideas, but re-expressed in the new tensorial formalism.

    \begin{proof}
      We prove the contrapositive: assume that $T$ is not strongly aperiodic, i.e. there is a $\vec u=(a,b) \neq 0$ such that for all $n$, $\tr_{\vec u}(R_{a,n}\cdot T) \neq 0$ and $\tr_{\vec u}(R_{n,b}\cdot T) \neq 0$, without loss of generality we will assume that $b\neq 0$ and $a,b \geq 0$, the other cases being symmetrical. For any $n$, $\tr_{\vec u}(R_{n,b}\cdot T)$ is depicted:
      \begin{center}
        \tikzfig{traceenbiais}
      \end{center}

      For all $n\geq a+1 $ we define the matrix $K_n $ by reorganizing the indices of $\tr_{\vec u}(R_{n,b}\cdot T)$ as follows:

      \begin{center}
        \tikzfig{defK}
      \end{center}

      We remark that $K_m \circ K_n = K_{n+m}  $:
      \begin{center}
        \scalebox{0.9}{\tikzfig{compK}}
      \end{center}

      As $K_n $ is $\tr_{\vec u}(R_{n,b}\cdot T)$ with reorganized indices, $K_n =0 $ only if  
      $\tr_{\vec u}(R_{n,b}\cdot T)=0 $ which is never true by hypothesis, and then for each $n$, $K_n \neq 0 $. In particular $(K_n)^l = K_{nl} \neq 0 $ for all $l$, so $ K_{n} $ is not nilpotent.
      Setting $n=a+1$, by charaterization of (non-)nilpotent matrices (\cref{trace}), there exists $m_0$ such that $\tr_{(m_0(a+1), 0)}\left( \tr_{(a,b)}(R_{m_0(a+1),b}) \right) = \tr(K^{m_0(a+1)}) \neq 0$. We will fix $q=m_0(a+1)$, graphically:
      \begin{center}
        \tikzfig{traceK}
      \end{center}

      We will now look at our tensors as matrices from bottom to top. Let $\sigma $ be the cyclic permutation matrix corresponding to moving the first $a$ elements to the end of a list of $q$ elements. We define the matrix $L$ as $L= \sigma \circ \tr_{(q, 0)}\left( R_{q,b} \right)$, pictorially:
      \begin{center}
        \tikzfig{defL}
      \end{center}

      We directly see graphically that: $\tr(L)= \tr(K^{q}) \neq 0 $ so $L$ is not nilpotent. Then:

      \[\left(\sigma^l \circ \tr_{(q, 0)}\left( R_{q,lb}\cdot T \right)\right) \circ \left(\sigma^k \circ \tr_{(q, 0)}\left( R_{q,kb}\cdot T \right)\right)=\sigma^{(k+l)} \circ \tr_{(q, 0)}\left( R_{q,(k+l)b}\cdot T \right)\]

      We show it pictorially:

      \begin{center}
        \tikzfig{compL}
      \end{center}

      In this last step, we make the vertical columns of $T$ slide along the horizontal trace until the lower permutation $\sigma^k $ is completely disentangled and equal to the identity. This has the effect of composing the upper permutation with $\sigma^k $. Notice that the fact that $\sigma^k $ is cyclic is crucial for this operation to be done only by sliding columns along the trace. Then recombining the $T$s gives:
      \begin{center}
        \tikzfig{compL2}
      \end{center}
      From this it follows that: $L^k = \sigma^k \circ \tr_{(q, 0)}\left( R_{q,kb}\cdot T \right)$. Since $\sigma$ is a permutation there is a $p$ such that $\sigma^p =id $ and then $L^p = \tr_{(q, 0)}\left( R_{q,pb}\cdot T \right)$. Furthermore as $L$ is not nilpotent, $L^p $ is not nilpotent, and then by \cref{trace} there exists $m\in\N$ such that $\tr(L^{mp} )\neq 0 $. This provides a rectangle $R_{q,mpb}$ such that:
      \[ \tr(R_{q,mpb}\cdot T)=\tr_{(0,pb)}\left(\tr_{(q, 0)}\left( R_{q,mpb} \right)\right)=\tr(L^{mp} )\neq 0 \]
      and then $T$ is not weakly aperiodic.
    \end{proof}

    Overall, we see that just like in the possibilistic case, strong and weak aperiodicity are totally equivalent. Thus, we will only talk about trace aperiodicity without mentioning its weak or strong character. Notice that for classical tile sets having a rectangular strong period implies that there is a square strong period, however we did not manage to obtain a similar result for tensorial tiles in general. 
    Neither did we manage to find any counter example. 
    We state this problem as a first open problem.

    \begin{open}\label{open1}
      Given any tensorial tile $T$, does $\tr(R_{a,b} \cdot T)\neq 0$, for some $a,b\in \mathbb{N}$, implies that there is an $n\in \mathbb{N}$ such that $\tr(R_{n,n} \cdot T)\neq 0$ ?
    \end{open}
    
    This problem is more difficult than it seems. Given any rectangle such that $\tr(R_{n,m} \cdot T)\neq 0$ we can show that the sequences $k \mapsto \tr(R_{kn,m} \cdot T)$ and $k \mapsto \tr(R_{n,km} \cdot T)$ are linear recurrent, and then we can use a wide variety of results to independently study there zeroes. However, we need to understand the interplay of those different sequences to tell anything about $k \mapsto \tr(R_{k,k} \cdot T)$, it seems that the mathematical tools to describe those kind of two dimensional sequences are yet to be developed.

  \subsection{A quantum aperiodic tiletset}
    In dimension one, the only trace-aperiodic quantum tilesets are the ones that do not tile the line.
    In dimension two, just like in the classical case, these two notions are not equivalent.
    A classical aperiodic tileset provides an obvious (but not very satisfying) example of a quantum aperiodic tileset that tiles the plane. 
    The interesting question is whether quantum interference allow new \emph{kinds} of aperiodicites to appear. 
    Thus we have to look for a quantum tileset that is trace-aperiodic, but whose support is not aperiodic. Such a tileset really needs the interference to be aperiodic.

    \begin{theorem}\label{th:ap}
      There exists a quantum tile $T$ that: \textbf{has a non-aperiodic support}, \textbf{tiles the plane}, \textit{i.e.}, for all shape $S$, $S\cdot T \neq 0$, and \textbf{is trace-aperiodic}, \textit{i.e.}, for any rectangle $R, \tr(R\cdot T) = 0$.
    \end{theorem}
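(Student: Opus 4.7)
The plan is to reduce the problem to the construction of a single quantum tile carrying the interference, via a product construction. Choose $T_c$ to be a trivial classical tile on a one-colour alphabet (its support consists of a single tile and tiles $\Z^2$ strongly periodically), and set $T = T_c \times T_q$ for a quantum tile $T_q$ to be built. By the multiplicativities noted in the paper, $\tr(R_{m,n}\cdot (T_c\times T_q)) = \tr(R_{m,n}\cdot T_c)\cdot\tr(R_{m,n}\cdot T_q)$ and $S\cdot(T_c\times T_q) = (S\cdot T_c)\otimes(S\cdot T_q)$, combined with $\mathrm{supp}(T_c\times T_q) = \mathrm{supp}(T_c)\times\mathrm{supp}(T_q)$, the three required properties of $T$ reduce to constructing $T_q$ such that (a) $\tr(R_{m,n}\cdot T_q)=0$ for every rectangle and (b) $S\cdot T_q\neq 0$ for every finite shape $S$. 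Non-aperiodicity of $\mathrm{supp}(T)$ is guaranteed by the periodic tilings of $\mathrm{supp}(T_c)$.

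The technical heart of the proof is thus the construction of $T_q$. A naive factorised ansatz $T_q(x,y,z,t) = M(x,z)\,N(y,t)$ is ruled out immediately: a direct computation (analogous to the one earlier in the paper) gives $\tr(R_{m,n}\cdot T_q) = \tr(M^n)^m\tr(N^m)^n$, so condition (a) for every $m,n$ would force $M$ or $N$ to be nilpotent by \cref{trace}, and then some shape would collapse, contradicting (b). Genuine two-dimensional interference is therefore required. Rephrased via the horizontal column transfer matrix $H_n$ — the tensor of a column of height $n$, viewed as a matrix from left to right boundary with free top and bottom — condition (b) amounts to $H_n$ being non-nilpotent for every $n$ (so $R_{m,n}\cdot T_q \neq 0$ for all $m,n$, whence $S\cdot T_q\neq 0$ for every $S$ by \cref{lemma:subshape}), whereas condition (a) is equivalent to the vertical closure $\mathrm{tr}_V(H_n)$ being nilpotent for every $n$ (again by \cref{trace}). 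These two conditions concern different contractions of $H_n$ and are in principle compatible.

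The plan is then to exhibit an explicit $T_q$ realising both conditions. Natural candidates come from Kasteleyn-like sign structures on the square lattice or from small ``fermionic'' tiles whose torus partition function is engineered so that the amplitudes around every closed loop cancel, while the open-boundary tensor on any finite shape remains non-zero somewhere. The main obstacle is the delicate calibration of amplitudes: the destructive interference must be exact on every torus of every size $(m,n)$ without ever collapsing an open shape, and most local modifications of the weights spoil one of the two conditions. Once a valid $T_q$ is presented, condition~(a) is verified by computing $\mathrm{tr}_V(H_n)$ and checking that its trace powers vanish, while condition~(b) is verified by exhibiting a non-zero eigenvalue of $H_n$ or, equivalently, a non-zero component of $R_{m,n}\cdot T_q$; the three properties of $T = T_c\times T_q$ then follow directly from the multiplicativities above.
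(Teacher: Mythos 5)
Your reduction never closes: the entire proof rests on ``exhibiting an explicit $T_q$'' satisfying (a) $\tr(R_{m,n}\cdot T_q)=0$ for all rectangles and (b) $S\cdot T_q\neq 0$ for all finite shapes, and that tile is never constructed --- the proposal ends by describing what such a construction would have to look like and why it is delicate. Since this single object is the whole content of the theorem under your product reduction (taking $T_c$ on a one-colour alphabet makes $T=T_c\times T_q$ isomorphic to $T_q$ itself), the argument as written proves nothing. There is also a concrete error in the reduction: non-aperiodicity of $\mathrm{supp}(T_c\times T_q)=\mathrm{supp}(T_c)\times\mathrm{supp}(T_q)$ is \emph{not} guaranteed by the periodic tilings of $\mathrm{supp}(T_c)$. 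A valid configuration of a product tileset is a pair of valid configurations of the factors, and it is periodic only if both components are; with a trivial $T_c$ the periodicity question is carried entirely by $\mathrm{supp}(T_q)$, so you would additionally need $\mathrm{supp}(T_q)$ to admit a periodic tiling --- a third constraint you do not address.

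The paper avoids all of this by using the \emph{union} $\uplus$ rather than the product, which decouples the three requirements because the trace is additive and the shape tensor is a direct sum. It takes $T=T_A\uplus T_N$ where $T_A$ is a classical aperiodic tile (so $R\cdot T_A\neq 0$ for every rectangle while $\tr(R\cdot T_A)=0$) and $(T_N)_{x,y,z,t}=\bra{x}N\ket{z}\bra{t}N\ket{y}$ with $N=\frac{1}{2}\left(\begin{smallmatrix}1&1\\-1&-1\end{smallmatrix}\right)$ nilpotent and traceless. Note that $T_N$ is exactly the factorised ansatz $M(x,z)N(y,t)$ that you correctly rule out under the product reduction: it cannot tile the plane ($R\cdot T_N=0$ beyond $1\times 1$), but under the union it does not have to --- it only needs to contribute a fully supported (hence non-aperiodic) tileset and a vanishing trace, while $T_A$ supplies the non-zero shape tensors. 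Your observation that the factorised tile fails conditions (a) and (b) simultaneously is thus precisely the reason the product is the wrong combinator here; a single tile meeting (a), (b) and the support condition at once is essentially the ``purely quantum aperiodic'' tile the paper leaves as an open problem, so your route reduces the theorem to something at least as hard as what the authors could not do.
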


    \begin{proof}
      Let $\tau_A$ be any usual aperiodic tileset (for example Kari's aperiodic tileset \cite{KariAperiodic}) and $T_A$ its associated possibilistic tile.
      Let $N=\frac{1}{2}\begin{pmatrix}
                        1&1\\ -1& -1
                      \end{pmatrix}$ 
                      and $T_N$ be the following 2D quantum tile:
                      \[ (T_N)_{x,y,z,t} = \bra{x} N \ket{z} \bra{t} N \ket{y} .\]
                      We denote its support (wich is the full tileset) by $\tau_N$.
                      First, remark that since $N^2=0$, we have $R\cdot T_N = 0$ for any rectangle $R$, strictly larger than $1\times 1$. Then for any rectangle $R$,  $\tr(R\cdot T_{N}) = 0$, as $\tr(T_{N}) = \tr(N)^2 = 0$. However, all tiles are in $\tau_N$, in particular the ones with the same color everywhere, so $\tau_N$ has many valid periodic tilings.
                      
                      Then $T:= T_A \uplus T_N$ checks the conditions of the theorem:
                      \begin{description}
                        \item [Has a non-aperiodic support]: $\mathrm{supp}(T)=\mathrm{supp}(T_A \uplus T_N )=\tau_A \uplus \tau_N $ and $\tau_P$ has valid periodic configurations.
                        \item [Tiles the plane]: First $T\neq 0$ and for any rectangle $R$ bigger than $1\times 1$, $R\cdot T = R\cdot T_{A} \uplus R\cdot T_{N} = R\cdot T_{A} \oplus 0 \neq 0$ as $T_A$ tiles the plane. 
                        \item [Is trace-aperiodic]: For any rectangle $R$, $\tr(R\cdot T) = \tr(R\cdot T_{A}) + \tr(R\cdot T_{N}) = 0$ because $T_A$ is aperiodic and we already know that $\tr(R\cdot T_{N}) = 0$.
                      \end{description}
     \end{proof}

     This very simple example shows that quantum interference can force aperiodicity, however it is quite artificial, since a subset of its support form a classical aperiodic tileset (the support of $T_A$ is classicaly apeirodic by definition). This suggest the following definition.
     
     \begin{definition}[Purely quantum aperiodic tile]
     	A tensorial tile $T$ is said \textbf{purely quantum aperiodic} if it tiles the plane, it is trace-aperiodic and no subsets of $\mathrm{supp}(T)$ are aperiodic tilesets tiling the plane.
     \end{definition}
	
	Such tileset would exhibit aperiodicity of a new kind that do not rely on classical aperiodicity but only on interference. The existence of such a tileset is the second open problem of quantum tiling theory that we state.
	
	\begin{open}
		Does a purely quantum aperiodic tensorial tile exist ?
	\end{open}

	In particular, such tileset would not be concerned by the lower bound of \cite{Jeandel_Rao_2021}, opening the possibility of quantum aperiodic tilesets with less than 11 tiles.

\section{Applications}
  \label{sec:appli}

  \subsection{Space time diagram of quantum cellular automata}
    Wang tilesets are very often used as a way of representing space-time diagram of (classical) cellular automata. In this section we show that our model of quantum tiles allow us to do the same for certain type of quantum cellular automata.

    Without defining all the term, a (one dimensional) quantum cellular automaton is an operator on quantum configurations which is unitary, shift-invariant and causal (meaning that it can be decomposed into a uniform local functions with a finite radius of influence).
    A particular type of quantum cellular automaton is a \emph{partitioned (one-dimensional) quantum cellular automaton} (PQCA), defined by a local unitary operator $U$, which can be written as a 4-tensor. It is also required to preserve a special quiescent state $\ket q$, i.e. $U\ket q = \ket q$. 
    Its global operator is then $\bigotimes_{Z\Z^n} U$ (\cref{fig:PQCA}).
    This type of quantum cellular automaton may seem restrictive, as its radius is only two cells, however they are intrinsically universal: they able to simulate any quantum cellular automaton \cite{Arrighi_Grattage_2012}.

    \begin{wrapfigure}{r}{0.4\textwidth}
      \centering
      \includegraphics[width=5cm]{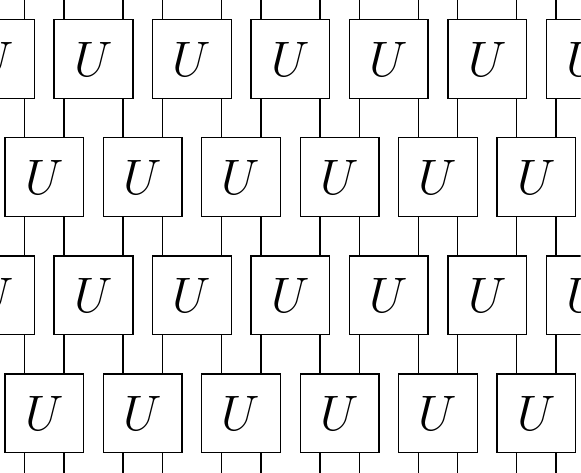}
      \label{fig:PQCA}
      \caption{Space-time diagram of a PQCA. The $U$s are the local unitary operators, the configurations are the infinite lines of wire between them.}
    \end{wrapfigure}

    Up to a constant factor, any PQCA $U$ is actually a quantum tile. 
    Therefore, $R_{m,n}\cdot U$ (or any $S\cdot U$) represents exactly a portion of space-time diagrams of the PQCA $U$, with any possible boundary condition.
    This shows how our model of quantum tile is a good generalization of Wang tiles, in the sense that it is able to represent spacetime diagrams of any PQCA, just as how classical Wang tiles can represent the space-time diagram of any radius-$\frac{1}{2}$ cellular automaton.

  \subsection{Simulating quantum walks}
    Our model of tensorial tilings exhibits typical quantum behaviors. We illustrate this by providing an example of quantum Wang tile simulating the simplest model of quantum walks on the line. The idea is to describe a walker on the line by its position $x\in \mathbb{Z}$ and its direction of movement $d\in \{\blacktriangleleft,\blacktriangleright\}$.
        In the quantum formalism the state of a walker is then a state $\ket{x,d}$, 
        \textit{i.e.} a superposition of the position in $\Z$ together with one qubit encoding a direction (left or right).
        Formally, at each step the step of the walker is of the form: $\sum_{x\in \mathbb{Z}} \sum_{d\in \{\blacktriangleleft,\blacktriangleright\}} y_{x,d}\ket{x,d}$. At each step of the walk, the walker updates its direction with a unitary $U=\begin{pmatrix}
    a & b\\ c & d
    \end{pmatrix}$ (called the coin operator), and then moves one step to the right or to the left according to its direction, which amounts to apply the shift operator $S$ such that: $\ket{x,\blacktriangleleft}\mapsto \ket{x-1,\blacktriangleleft}$ and $\ket{x,\blacktriangleright}\mapsto \ket{x+1,\blacktriangleright}$.

    So in one step the state of the walker evolves as:

    \begin{center}
    $\ket{x,\blacktriangleleft} \mapsto a\ket{x-1,\blacktriangleleft}+ c\ket{x+1,\blacktriangleright}$\\
    $\ket{x,\blacktriangleright} \mapsto b\ket{x-1,\blacktriangleleft}+ d\ket{x+1,\blacktriangleright}$
    \end{center}

    We built a quantum tile whose tilings represent the two-dimensional space time diagram of the walker evolution. We use three colors, $\emptyset$ representing an empty cell, $\blacktriangleleft$, representing a walker with left direction, and $\blacktriangleright$, representing the walker with right direction. Our quantum tile is then defined as the following tiles with the following amplitudes:
    \begin{center}
    \tikzfig{walktile}
    \end{center}
    The first three tiles propagate the walker (or its absence) encoding the shift operator $S$, and the four last encode the coin unitary $U=\begin{pmatrix}
    a & b\\ c & d
    \end{pmatrix}$. To simulate a walk with our quantum Wang tile we will consider a grid of odd width with empty boundary conditions on the left and right side. The lower side will encode the initial position of the walker with empty everywhere, except in the middle position where the walker is in a uniform superposition of direction left and right. The upper edge of the grid will allow us to read the result of the walk, we cautiously choose a width large enough to avoid forcing the walker to bounce, that is $m\geq 2n+1$. Notice that choosing $a=b=c=d=1$ we recover the classical uniform random walk, and obtain a Gaussian distribution, however taking $U=H$ we obtain the typical two-spikes distribution of the Hadamard quantum walk (see \cref{fig:chart_walk}).
    This highlight the richness of our model: a small quantum tileset is enough to simulate interesting quantum phenomena.

    \begin{figure}[ht!]
      \centering
      \vspace*{-1.5em}
      {\footnotesize \scalebox{0.8}{\tikzfig{quantumwalk}}}
      \quad
      \includegraphics[scale=0.24]{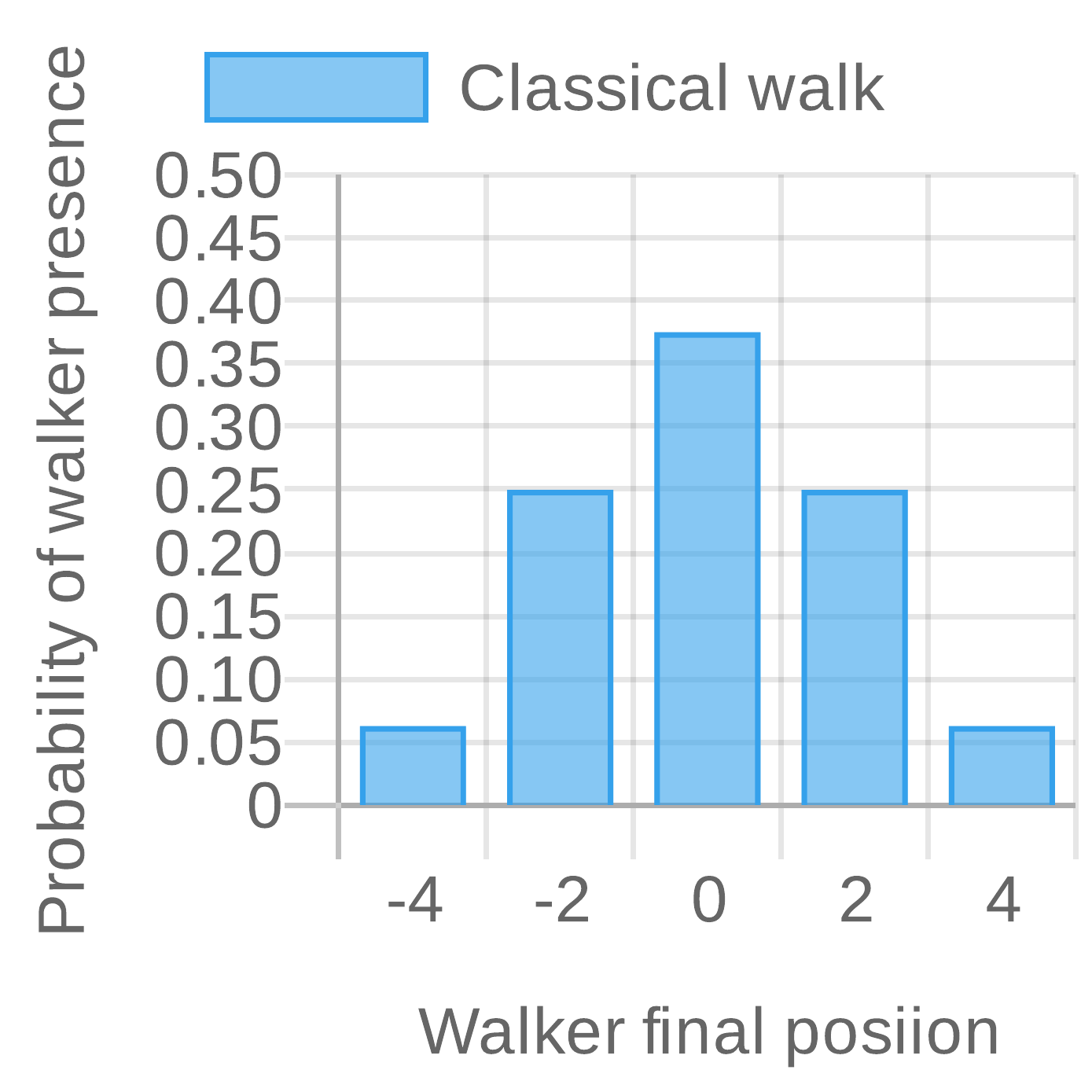}
      \includegraphics[scale=0.24]{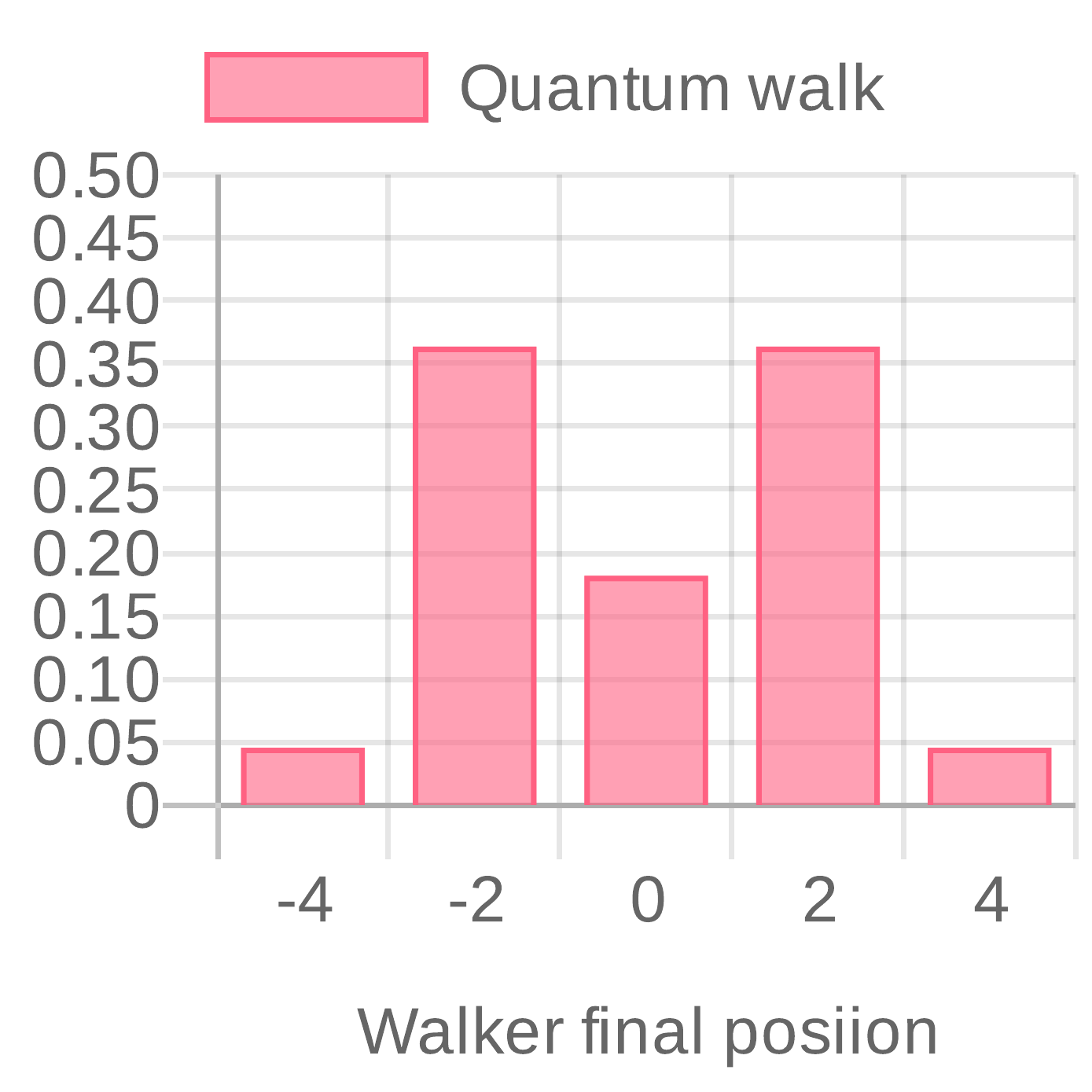}
      \caption{Simulation of four steps of a quantum walk using a $9\times 4$ pattern tiled by a quantum Wang tileset. One can see the characteristic Gaussian distribution of the classical walk (left data, in blue) and the two-spikes distribution caused by quantum interference (right data, in red). Note that the odd positions have probability 0 due to the design of the tileset.}
      \label{fig:chart_walk}
    \end{figure}

\section{Conclusion and further research directions}
  \label{sec:conclu}

	To summarize our contributions, we introduced a notion of quantum Wang tiles in dimensions one and two and demonstrated how they might behave differently than usual Wang tiles. In dimension one already, interference can change the allowed patterns of a tileset, changing the standard notion of ``tiling the line''.
  	In dimension two, the same behavior is observed with periods, even though some fundamental properties still hold.
  	This preliminary work opens a lot of questions to be investigated, and we try here to provide a list as complete as possible of relevant directions, a few of which having already been mentioned in the paper.
  
  \subsection{Minimal trace-aperiodic tiling}

  The obvious next goal is to find a ``purely quantum'' aperiodic tileset, as our current construction relies on a classical aperiodic tileset. 
  For classical Wang tilesets, Jeandel and Rao showed that no aperiodic tileset can have less than 11 tiles \cite{Jeandel_Rao_2021}. However, no such limitation is known in the quantum world, and it might be possible to use the power of interference to create a quantum aperiodic tileset with less than 11 tiles in its support.

	\subsection{Cellular automata}
	
  A key motivation for the development of quantum Wang tiles was the quest for a mathematical object playing with respect to quantum cellular automata, the role that usual Wang tiles play with respect to cellular automata.
  Our current model depicts exactly (finite) space-time diagrams of Partitioned quantum cellular automata that can simulate any other quantum cellular automata \cite{Arrighi_Grattage_2012}. 
  This connection is not developed in the present paper and is the subject of ongoing work. We are particularly interested in the relationship between determinism in tilings and causality in tensor networks.
		
  \subsection{Condensed matter physics}

  We expect that linear algebraic and string diagrammatic methods could be of interest to reformulate and to study quantitative aspects of tiling theory. The properties of possibilistic tiles allow to count the number of admissible patterns via algebraic methods, a very important problem in statistical physics (see for example \cite{kasteleyn1961statistics}). More generaly we hope our model could act as a bridge between symbolic dynamic and condensed matter theory. The object studied in those field are very similar but are approached with completely different sets of question by the different comunities. Indeed diagrams similar to ours have also been used in condensed matter theory, for example, to represent MPS product states \cite{orus2014practical}. The precise link between our framework and this field will be the object of further work. The cornerstone question here being to link the physicist point of view on local constraints, describing the desired configuration as the ground state of a Hamiltonian, with the approach of mathematicians and computer scientists, who typically understand local constraints as local forbidden patterns.

\subsection{Diagrammatical rewriting}
	
	One can have a purely syntactic approach to tensors using diagrammatic equational theories. In a sense, this mean opening the tensorial tiles to build them from elementary generators and apply rewriting technics to study there properties. This approach allows a direct connection with the diagrammatic languages used to represent quantum processes, as $ZX$, $ZW$ and $ZH$-calculus \cite{coecke2008interacting,coecke2010compositional,backens2018zh} that have already been used to approach similar combinatorial problems \cite{de2020tensor}. The diagrammatical language share many property with opur model, in particular they share the same freedom in defining prefered a prefered direction of time to interprte the diagram. More generally, even if we choose not to emphasize this aspect too much, our approach is deeply rooted in the categorical quantum mechanic program, and the theory of monoidal categories \cite{bob2017picturing}. Indeed the way we defined tensorial tiles can be naturally generalized in compact closed categories.

\subsection{Diagrammatical symbolic dynamics}

	A very powerful tool to understand classical Wang tilings is the strong link with subshifts of finite type and thus symbolic dynamics. It would be interesting to design a quantum analog of subshifts of finite type. However, this is not a straightforward task, mainly because the no-cloning theorem prevents cells from communicating their whole state to more than one neighbor. 

  We introduced quantum Wang tiles only for the line and the plane but the extension in higher dimension is straightforward. Hypercube tensorial tiles of dimension $d$ are tensors with $2d$ indices and definitions of possibilistic, probabilistic and quantum hyper tiles are completely analogous. The formalism also directly extends to the case of other shapes, such as triangles or hexagons, extensions to different topologies are also possible, like a torus or Cayley graphs of groups. 

  Our ongoing work on reformulating symbolic dynamics in a tensorial formalism raises the hope that it allows natural generalization of one-dimensional techniques (for example based on finite automata) to study configurations on more exotic topologies than the line.

 \subsection{Skolem}
 
  Our tensorial view on tilings allow us to draw a new bridge between tilings and the Skolem problem \cite{skolem1}.
  The Skolem problem can be formulated as follow. The input is a square integer matrix $M$ with a pair of vectors $\ket a , \ket b$. It asks the question whether there exists $n$ such that $\bra a M^n \ket b = 0$.
  %
  It is in fact very similar to the problem of tiling the line with tensorial dominoes. Indeed, tensorial domino $M$ does \emph{not} tile the line if and only if there exists $n$ such that for all pair of vectors $\ket a , \ket b$, $\bra a M^n \ket b = 0$.
  However, one can notice the difference of quantification between the two problems, leading to different results regarding their decidability: the tiling problem is decidable, as equivalent to deciding the nilpotency of the matrix, whereas the decidability of the Skolem problem is still unknown.
  This similarity allow us to interpret the Skolem problem as a (tensorial) tiling problem: there is no zero in the sequence $(\bra a M^n \ket b)_n$ if and only if the tensorial tile $M$ admits arbitrary wide patterns with border $\bra a, \ket b$. In other words, the Skolem problem corresponds to a "constrained" domino problem, where the border is imposed to be $\bra a, \ket b$ instead of any color vector.

  It is also possible to reduce the Skolem problem to the two-dimensional tensorial tiling problem using a Robinson aperiodic tileset, and classical tricks from tilings in order to make patterns $\bra a M^n \ket b$ appear in the tiling for all $n$, in the squares drawn by the Robinson tileset. This tensorial tile then tiles the plane if and only if $\bra a M^n \ket b \neq 0$ for all $n$.
  Of course the 2D tiling problem being undecidable, this does not show help deciding Skolem problem, but this draws an interesting link between Skolem's problem and tiling problems. It places it somewhere at the fascinating frontier between the 1D tiling problem (decidable) and 2D tiling problem (undecidable).

  This link also shows that our open problems might be very difficult to tackle. Especially \cref{open1}, as it can be understood as understanding the zeroes of a sequence defined not by a matrix product as in Skolem, but by more general 2D tensor contraction. If Skolem's problem is not fully understood for matrix power, \cref{open1} looks even more challenging as the "growing square of a tensor" operation is much more complex and less understood than matrix power.

\section*{Acknowledgements}
  The authors warmly thank Pablo Arrighi for the many fruitful discussions and funding at the early stage of this project, as well as Emmanuel Jeandel for his useful discussions.

  This publication was made possible through the support of the ID\# 61466 grant from the John Templeton Foundation, as part of the “The Quantum Information Structure of Spacetime (QISS)” Project (\url{qiss.fr}). The opinions expressed in this publication are those of the author(s) and do not necessarily reflect the views of the John Templeton Foundation.

\bibliographystyle{plainurl}
\bibliography{biblio}

\end{document}